%% file: main.tex
\documentclass[12pt,psfig]{article}

\usepackage[margin=1in]{geometry} 
\usepackage{setspace}              
\usepackage{url}
\usepackage{type1cm,verbatim,natbib}
\usepackage{graphicx}
\usepackage{latexsym}
\usepackage{lscape}
\usepackage{setspace,color,bbm,amsfonts}
\usepackage{amsthm}
\usepackage{float} 
\usepackage{bm}
\usepackage[dvipsnames]{xcolor}
\usepackage{longtable}

\renewcommand\thetable{\arabic{table}}

\usepackage{epsfig}
\usepackage{amsmath}

\def\bl{{\bf l}}

\def\bl{\begin{flushleft}}
\def\el{\end{flushleft}}
\newcommand{\bcr}{\begin{center}}
\newcommand{\ecr}{\end{center}}
\newcommand{\blind}{0}

\fontsize{40pt}{\baselineskip}\selectfont

\newtheorem{theorem}{Theorem}

\begin{document}           
\if0\blind
{
  \title{\bf On Cluster Randomized Trials with the Desirability of Outcome Ranking (DOOR) Endpoints}
  \author{
    Wanying Shao$^1$, 
    Toshimitsu Hamasaki$^1$, Scott Evans$^1$ \& Guoqing Diao$^1$\footnote{Corresponding Author: Guoqing Diao, George Washington University; Email:gdiao@gwu.edu.} \\
    $^1$    Department of Biostatistics and Bioinformatics \\
    The George Washington University, 
    Washington, D.C.\\
  }
\date{}
  \maketitle
} \fi

\if1\blind
{
  \bigskip
  \bigskip
  \bigskip
  \begin{center}
    {\LARGE\bf On Cluster Randomized Trials with the Desirability of Outcome Ranking (DOOR) Endpoints}
\end{center}
  \medskip
} \fi

\vspace{-10mm}
\begin{abstract}
\textcolor{black}{Cluster randomized trials are widely used when individual randomization is logistically
infeasible or when correlations between observations cannot be ignored, especially in fields such as
ophthalmology, infectious disease, vaccine research, and sociology. The desirability of outcome ranking
(DOOR) framework evaluates patient-centric benefit–risk using an ordinal outcome and a Wilcoxon-Mann-Whitney
statistic-based approach to compare outcome distributions between interventions. We propose a suite of new methods to extend DOOR to cluster trials based on
properties of U-statistics and influence functions to estimate within-cluster and between-cluster treatment
effects. These approaches can be applied in different scenarios, including mixtures of clusters with two
treatment groups and clusters with only one group, and both small and large numbers of clusters. Simulations
demonstrate that the proposed methods perform well under various scenarios regarding the number of clusters
and cluster sizes. As an illustration, we apply the proposed methods to a cluster randomized crossover trial comparing delayed cord clamping and umbilical cord milking for newborns. }
\end{abstract}
\vspace{-5mm}
{\it Keywords:} Within-cluster correlation; Influence function; Wilcoxon-Mann-Whitney test; U-statistics

\vspace{-5mm}
\section{Introduction}
Cluster randomized trials (CRT) are widely used in ophthalmology, infectious diseases,  vaccine research, and sociological studies. When considering interventions against infectious diseases, it is essential to account for indirect effects, in which individuals benefit from interventions provided to other members of the community \citep{hayesClusterRandomisedTrials2017}.  In sociology, units such as schools, communities, or families are often randomized as intact groups \citep{ cookEmergentPrinciplesDesign2005,spybrookProgressDecadeExamination2016}. Cluster randomized trials are therefore preferred when indirect effects need to be measured, correlations between observations cannot be ignored, or individual randomization is logistically infeasible. The need to avoid treatment contamination is also a common reason for choosing CRTs \citep{donnerAspectsDesignAnalysis1998,cookStatisticalLessonsLearned2016}. 
In \textcolor{black}{ophthalmology} research, clinical trials often involve clustered data when both eyes of a patient are enrolled in the study \citep{rosnerStatisticalMethodsOphthalmology1982}.

\noindent
In CRTs, observations within the same cluster are more alike than observations across clusters. Failing to account for the intracluster correlation coefficient (ICC) will underestimate the variance of the parameter estimator for the intervention effect and lead to misleading results \citep{jungClusterRandomizationTrials2024}. Due to positive ICCs, CRTs usually require a larger total sample size than individually randomized trials \citep{hemmingKeyConsiderationsDesigning2023}. Furthermore, the statistical power of a CRT is driven more by the number of clusters than by the cluster sizes \citep{wangChoosingUnitRandomization2024}.
\\\\
A large amount of literature has taken into account the correlation among units within a cluster and has proposed an adjusted chi-square test \citep{donnerAnalysisSitespecificData1988,donnerStatisticalMethodsOphthalmology1989,raoSimpleMethodAnalysis1992,ahnEvaluationWeightedChiSquare2003} and sample size calculations \citep{jungSampleSizeCalculations2001,kangSampleSizeCalculation2003} for binary outcomes. Continuous outcomes are commonly analyzed using mixed-effects models \citep{donnerRegressionApproachAnalysis, lairdMaximumLikelihoodComputations1987, hedekerRandomeffectsRegressionModels1994} or generalized estimating equations (GEE) \citep{liangLongitudinalDataAnalysis1986}. Some work has focused on clustered ordinal outcomes using parametric methods \citep{rosnerMultivariateMethodsClustered1997} and then considered incorporating the clustering effects for non-parametric methods, such as the Mann-Whitney U test \citep{rosnerUseMannWhitney1999} and the Wilcoxon signed-rank test \citep{rosnerWilcoxonSignedRank2006,dattaSignedRankTestClustered2008}. \citet{rosnerIncorporationClusteringEffects2003a} extended the Wilcoxon rank-sum test under the assumptions that all subunit observations within a cluster belong to the same treatment group, that observations within any cluster are exchangeable. \citet{rosnerExtensionRankSum2006} extended their approach to accommodate the situation where observations of a cluster can be assigned to different treatment groups, but still assumed exchangeability with the same intra-cluster dependence across groups. As a continuation of this work, \citet{rosnerPowerSampleSize2011} developed corresponding power and sample size calculations. The assumptions of \citet{rosnerIncorporationClusteringEffects2003a} were relaxed in the approach of \citet{dattaRankSumTestsClustered2005}, which is based on within-cluster resampling and remains valid when the cluster sizes are informative (i.e., when the outcomes and the cluster size are correlated). \citet{duttaRanksumTestClustered2016} extended the idea of within-cluster resampling to further accommodate the case where the number of observations in a group within a cluster is informative. Nevertheless, the \citet{dattaRankSumTestsClustered2005} method cannot be applied to very small cluster sizes, such as strictly contralateral designs. \citet{larocqueTwoSampleTests2010b} developed a Mann-Whitney-type test 
when clusters can contain observations from both groups, with no other assumptions made about the scales or shapes of the two distributions. However, they focused on the between-cluster comparison while with-cluster differences between treatments are ignored. 
\\\\
Motivated by studies in infectious diseases, \citet{evansDesirabilityOutcomeRanking2015a} proposed the Desirability of Outcome Ranking (DOOR) to evaluate benefits and harms at the patient level as an ordinal outcome using the Wilcoxon-Mann-Whitney test while accounting for ties. Patients were ranked according to a predefined ordinal overall clinical outcome. The DOOR framework integrates the efficacy and safety results that are possibly associated; in addition, it acknowledges the cumulative nature of patient outcomes and appropriately addresses competing-risk complexities.
\citet{hamasakiPatientcentricParadigmTool2025} described its applications to the analysis of clinical trials, and \citet{hamasakiDesirabilityOutcomeRanking2025} outlined the motivations for using DOOR as an analytic framework. \citet{shuLongitudinalBenefitRisk2025} extended DOOR to longitudinal studies, handling correlations between longitudinal outcomes. The DOOR endpoint has been widely used in clinical trials and observational studies of infectious, neurologic, and obstetrical diseases \citep{tammaClinicalImpactCeftriaxone2022, chamberlainDesirabilityOutcomeRanking2023, turnerDalbavancinTreatmentStaphylococcus2025,sandovalDesirabilityOutcomeRanking2024}. 
\\\\
A recent work by \citet{fangSampleSizeDetermination2025} developed a unified framework for  sample size and power calculations for win statistics (win ratio, win odds, and net benefit) in parallel cluster randomized trials, accounting for the rank intra-class correlation coefficient, cluster size variability, tie probability, and outcome prioritization. The DOOR probability can be expressed as a linear transformation of the net benefit \citep{buyseGeneralizedPairwiseComparisons2010}, with the two measures differing in how ``Win'' is determined. However, additional considerations and designs arise in ophthalmology studies. Each eye of a patient may receive different treatments, which requires attention to the correlation between the eyes \citep{rosnerStatisticalMethodsOphthalmology1982}. For example, in the study described by  \cite{pall2019management}, there were three groups:(1) contralateral/split-eye group, in which subjects received the test lens in one eye and the control lens in the contralateral eye; (2) test lenses group in which both eyes received the same intervention; and (3) control lenses group, in which both eyes received the control lens. Other recent examples in ophthalmology research include \cite{lyu2020comparison} and \cite{price2021randomized}. 
\\\\
Another motivating example is the Milking In Non-vigorous Infants (MINVI) study,  which was a cluster randomized crossover trial conducted between January 2019 and May 2021, and primary outcomes were available for 1,730 newborns from 10 medical centers \citep{katheriaUmbilicalCordMilking2023}. The study hypothesized that umbilical cord milking (UCM) would reduce admission to the neonatal intensive care unit (NICU) compared to early cord clamping (ECC).  Medical centers were randomized 1:1 to UCM or ECC in Period One and then crossed over to the other intervention during Period Two, with infants as the unit of analysis. The primary outcome was NICU admission within the first 24 hours of life based on predefined clinical criteria, while the principal safety outcome was hypoxic-ischemic encephalopathy (HIE). Clinicians were also interested in capturing the complete story of the patient experience using a DOOR framework. To our knowledge, currently there are no methods available for all types of cluster randomized trials with DOOR endpoints. 
\\\\
In this article, we develop several new approaches to measure both the within-cluster and between-cluster effects. The proposed methods accommodate different scenarios in practical cluster randomized trials, including small cluster sizes and a small number of clusters. Furthermore, the proposed methods are applicable to scenarios in which subjects in some clusters receive the same treatment, whereas subjects in other clusters are randomly assigned to two treatment arms. This type of design is particularly common in ophthalmology trials \citep{pall2019management, lyu2020comparison, price2021randomized}. In Section 2, we define the within- and between-cluster DOOR probabilities, and a mixture of these two DOOR probabilities. We develop estimation and inference procedures using the properties of U-statistics and influence functions. We then propose a novel test statistic in Section 3 to test the within- and between-cluster treatment effects while controlling for the family-wise error rate and describe how to assess the between-cluster variability. We conduct extensive simulation studies to examine the performance of the proposed methodology in Section 4 and provide a real application in Section 5. We conclude the paper with some discussions in Section 6. Technical details are deferred to the Online Supplement. 

\vspace{-7mm}
\section{Methods}
\vspace{-6mm}
Before presenting the proposed methods, we 
list the frequently used notation in the following table for ease of presentation. The unit of study subject can be a patient in most studies or an eye of a patient in ophthalmology studies. 
\begin{longtable}{|c|l|}
\hline
$n$ & Number of clusters  \\ \hline
$m_{i}$ & Number of subjects in cluster $i$  \\ \hline
$m_{i1}$ & Number of subjects in cluster $i$ assigned to treatment group  \\ \hline
$m_{i2}$ & Number of subjects in cluster $i$ assigned to  control group \\ \hline
$N_1$ & Total number of subjects assigned to treatment group         \\ \hline
$N_2$ & Total number of subjects assigned to control group         \\ \hline
$K$ & Number of DOOR levels        \\ \hline
$Y_{ij}$ & DOOR rank of the $j$th subject in the $i$th cluster         \\ \hline
$A_{ij}$ & Treatment group indicator of the $j$th subject in the $i$th cluster         \\ \hline
$I(\cdot)$ & Indicator function \\ \hline
$D_{wi}$ & DOOR probability for cluster $i$         \\ \hline
$D_{w}$ & Within-cluster DOOR probability    \\ \hline 
$D_{b}$ & Between-cluster DOOR probability        \\ \hline
$\widehat{D}_{wi}$ & DOOR probability for cluster $i$ estimator      \\ \hline
$\widehat{D}_w$ & Inverse variance weighted within-cluster DOOR probability estimator       \\ \hline
$\widehat{w}_i$ & Weight for the $i$th cluster used in $\widehat{D}_w$     \\ \hline
$\widetilde{D}_w$ & Sample size weighted within-cluster DOOR probability estimator       \\ \hline
$\widetilde{w}_i$ & Weight for the $i$th cluster used in $\widetilde{D}_w$     \\ \hline
$\widehat{D}_b$ & Between-cluster DOOR probability estimator      \\ \hline
$\widehat{W}_{w}$ & Test statistic for testing $H_0:D_w=0$ based on $\widehat{D}_w$    \\ \hline
$\widetilde{W}_{w}$ & Test statistic for testing $H_0:D_w=0$ based on $\widetilde{D}_w$    \\ \hline
$\widehat{W}_{b}$ & Test statistic for testing $H_0:D_b=0$ based on $\widehat{D}_b$    \\ \hline
$\widehat{W}_{v}$ & Test statistic for testing $H_0:D_b-D_w=0$ based on $\widehat{D}_b$ and $\widehat{D}_w$    \\ \hline
$\widetilde{W}_{v}$ & Test statistic for testing $H_0:D_b-D_w=0$ based on $\widehat{D}_b$ and $\widetilde{D}_w$    \\ \hline
$\widehat{W}_{\max}$ & Test statistic for testing $H_0:D_b=D_w=0.5$ based on $\widehat{D}_b$ and $\widehat{D}_w$    \\ \hline
$\widetilde{W}_{\max}$ & Test statistic for testing $H_0:D_b=D_w=0.5$ based on $\widehat{D}_b$ and $\widetilde{D}_w$      \\ \hline
$\widehat{D}_{wt}$ & Weighted average of $\widehat{D}_{w}$ and $\widehat{D}_{b}$     \\ \hline
$\widetilde{D}_{wt}$ & Weighted average of $\widetilde{D}_{w}$ and $\widehat{D}_{b}$   \\ \hline
$\widehat{W}_{wt}$ & Test statistic for testing $H_0:D_b=D_w=0.5$ based on $\widehat{D}_{wt}$    \\ \hline
$\widetilde{W}_{wt}$ & Test statistic for testing $H_0:D_b=D_w=0.5$ based on $\widetilde{D}_{wt}$      \\ \hline
$\sigma_i^2$ & Asymptotic variance of $\widehat{D}_{wi}$  \\ \hline
$\psi_{wi}$ & Influence function of $D_w$ for the $i$th cluster    \\ \hline
$\psi_{bi}$ & Influence function of $D_b$ for the $i$th cluster   \\ \hline
$r$ & $r=\lim_{m_i\rightarrow \infty} \frac{m_{i1}}{m_i}$   \\ \hline
$\rho$ & $\rho = \text{Cov}(\psi_{bi}, \psi_{wi})/\sqrt{\text{Var}(\psi_{bi})\text{Var}(\psi_{wi})}$ \\ \hline
$\rho_c$ & Within-cluster correlation \\
\hline
\end{longtable}

\subsection{Within-Cluster DOOR Probability}
\vspace{-6mm}
We first introduce the cluster-specific DOOR probability and describe how to estimate and make inferences on it for each cluster. Suppose that there are $n$ clusters with $m_i$ subjects in the $i$th cluster. For the $j$th subject in the $i$th cluster, let $Y_{ij}$ denote the DOOR rank and $A_{ij}$ the treatment indicator taking value 1 for the treatment group and 0 for the control group. Without loss of generality, we assume that the DOOR endpoint has $K$ levels, and a lower rank is preferred, whereas a higher rank is considered a less desirable outcome. For the $i$th cluster, the DOOR probability is defined as
\begin{align*}
  D_{wi} &= E\{I(Y_{ij} < Y_{ij'}) + I(Y_{ij}=Y_{ij'})/2 \},
\end{align*}
for a randomly selected pair such that $A_{ij}=1$ and $A_{ij'}=0$. Note that in parallel cluster randomized trials, all subjects in a cluster receive the same treatment, i.e., all $A_{ij}$s are  0 or 1. In this case, the within-cluster DOOR probability is not estimable. \cite{evansDesirabilityOutcomeRanking2015a} proposed to use a Wilcoxon-Mann-Whitney statistic corrected for ties to estimate the DOOR probability for a two-sample trial design. Thus, we estimate $D_{wi}$ by 
\begin{align*}
\widehat{D}_{wi}&=\frac{1}{m_{i1}m_{i2}}\sum_{j=1}^{m_{i}}\sum_{j'=1}^{m_{i}}\phi(Y_{ij},Y_{ij'}),
\end{align*}
where $\phi(Y_{ij},Y_{ij'})=A_{ij}(1-A_{ij'})\{I(Y_{ij} < Y_{ij'}) + I(Y_{ij}=Y_{ij'})/2\}$, and {$m_{i1}=\sum_{j=1}^{m_i} A_{ij}$ and $m_{i2} = m_i - m_{i1}$ are the numbers of subjects in the treatment group and control group, respectively. By Theorem 10.13 in \citet{hunterNotesGraduatelevelCourse}, we can show that
$\sqrt{m_i}(\widehat{D}_{wi}-D_{wi})\xrightarrow{d}N(0,\sigma_i^2)$ when $m_i\to \infty$, 
where $\sigma_i^2=\frac{\sigma_{10i}^2}{r}+\frac{\sigma_{01i}^2}{1-r}$, $r=\lim_{m_i\rightarrow \infty} \frac{m_{i1}}{m_i}$, $\sigma_{10i}^2=\text{Cov}(\phi(Y_{ij},Y_{il}),\phi(Y_{ij},Y_{il'}))$, and $\sigma_{01i}^2=\text{Cov}(\phi(Y_{ij},Y_{il}),\phi(Y_{ij'},Y_{il}))$ with   $A_{ij}=A_{ij'}=1$ and $A_{il}=A_{il'}=0$. Detailed expressions of $\sigma_{10i}^2$ and $\sigma_{01i}^2$ are provided in Supplement A. 
\\\\
Assuming a common cluster-specific DOOR probability across clusters (i.e., $D_{w1}=\cdots=D_{wn}$), referred to as the within-cluster DOOR probability $D_w$, 
we propose the following two approaches to combine the results from multiple clusters.

\vspace{-6mm}
\subsubsection{Inverse variance weighted within-cluster DOOR probability estimator}
\vspace{-6mm}
The first approach  is to summarize the DOOR probabilities from the $n$ clusters using optimal weights. In particular, we estimate $D_w$ by $\widehat{D}_{w} = \sum_{i=1}^n \widehat{w}_i \widehat{D}_{wi}$, where $\widehat{w}_i = m_i\widehat{\sigma}_i^{-2}/\sum_{i'=1}^n m_{i'} \widehat{\sigma}_{i'}^{-2}$ and $\widehat{\sigma}_i^2$ is a consistent estimator of $\sigma_i^2$.  It is straightforward that $\text{Var}(\widehat{D}_{w})$ can be consistently estimated by $1/{\sum_{i=1}^n m_i \widehat{\sigma}_i^{-2}}$.
We note that this weighting approach is the same as the fixed effects meta-analysis approach. We define $\psi_{wi} = nm_i \sigma_i^{-2}(\widehat{D}_{wi} - D_w)/\sum_{i'=1}^n m_{i'} \sigma_{i'}^{-2}$, which can be viewed as the influence function of $D_w$ for the $i$th cluster.

\vspace{-6mm}
\subsubsection{Sample size weighted within-cluster DOOR probability estimator}
\vspace{-6mm}

 The optimal weighting approach described above may fail when the number of observations in a cluster is small or the correlation between $Y_{ij}$ and $Y_{ij'}$ is high, leading to an unstable and near zero estimate of  the variance of $\widehat{D}_{wi}$. Alternatively, we assign weights based on the sample size within each cluster and treatment group and obtain the following estimator 
\begin{align*}
\widetilde{D}_{w} = & \frac{1}{\sum_{g=1}^n m_{g1} m_{g2}}\sum_{i=1}^{n} \sum_{j=1}^{m_{i}}\sum_{j'=1}^{m_{i}}\phi(Y_{ij},Y_{ij'})
= \sum_{i=1}^n \widetilde{w}_i \widehat{D}_{wi}
\end{align*}
where $\widetilde{w}_i = m_{i1}m_{i2} / \sum_{g=1}^n m_{g1} m_{g2}$. 
 The influence function of $D_w$ for the $i$th cluster, with a slight abuse of notation, is $\psi_{wi}=n\widetilde{w}_i(\widehat{D}_{wi} - D_w)$. When the number of clusters is large, the variance of $\widetilde{D}_{W}$ can be estimated using the empirical version
$\sum_{i=1}^n \widetilde{w}_i^2 (\widehat{D}_{wi} - \widetilde{D}_{w})^2$ (Type 1 method).
Alternatively, if the number of clusters is small but the cluster size is large, we can estimate 
$\text{Var}(\widetilde{D}_{w})$ with $\sum_{i=1}^n \widetilde{w}_i^2\sigma_i^2$ (Type 2 method).

\subsubsection{Small-sample correction}
\vspace{-6mm}
In practice, the number of clusters is often small in a CRT and hence the large sample approximation for the standard error estimation of $\widetilde{D}_w$ may not be accurate. To correct for small-sample biases in standard error estimates, we can use the sample variance of the influence function and then multiply it by $1/(n-2)$ to estimate $\text{Var}(\widetilde{D}_w)$.  As noted by \citet{cornfieldSYMPOSIUMCHDPREVENTION1978}, ``one additional inflation factor requires consideration, the degrees of freedom with which $\sigma^2$ is estimated.'' Accordingly, critical values are taken from $t_{n-1}$ instead of N(0,1) to construct confidence intervals and perform hypothesis testing (Type 3 method). The same corrections can be applied to the estimator of the between-cluster DOOR probability described in the subsequent subsection.

\vspace{-6mm}
\subsection{Between-Cluster DOOR Probability}
\vspace{-6mm}
As described above, the within-cluster DOOR probability is not estimable when all subjects in a cluster receive the same treatment, for example, in a parallel cluster randomized trial. Instead, we consider the so-called between-cluster DOOR probability, defined as 
\[
D_b = E\{I(Y_{ij} < Y_{i'j'}) + I(Y_{ij}=Y_{i'j'})/2 \},
\]
for a randomly selected pair such that $A_{ij}=1$, $A_{i'j'}=0$, and $i\neq i'$. We consider a trial design in which subjects within a cluster could be assigned to either treatment or control, and estimate $D_b$ as $\widehat{D}_b=\frac{1}{N_1 N_2-\sum_{g=1}^n m_{g1}m_{g2}}\sum_{1\leq k < i \leq n}(\Phi_{ik}+\Phi_{ki})$, where $N_1$ and $N_2$ are the total number of observations from treatment and control groups, respectively, and $\Phi_{ik}=\sum_{j=1}^{m_{i1}}\sum_{l=1}^{m_{k2}}I(A_{ij}=1,A_{kl}=0)\phi(Y_{ij},Y_{kl})$, $k\neq i$. 
 \\\\
Let $\mathbb{P}_n$ be the empirical measure based on $n$ i.i.d. observations (clusters).  Define $h({\bm Y}_i, {\bm Y}_k)= \frac{\binom{n}{2}}{N_1 N_2-\sum_{g=1}^n m_{g1}m_{g2}}(\Phi_{ik} + \Phi_{ki})$, where ${\bm Y}_i$ and ${\bm Y}_k$ are the observations in the $i$th and $k$th clusters, respectively. Using the properties of a $U$-statistic, we can prove the following theorem.
\begin{theorem}
Let
$\psi_{bi} = 2 E\!\left[h(\widetilde{ \boldsymbol{Y}}_i, \boldsymbol{Y}_i)-D_b \mid \boldsymbol{Y}_i\right]$,
where $\widetilde{ \boldsymbol{Y}}_i$ is an independent copy of   $\boldsymbol{Y}_i$, then $\psi_{bi}$ is the influence function of $D_b$ and
$\sqrt{n}\,(\widehat{D}_b - D_b)
= \sqrt{n}\mathbb{P}_n\psi_b + o_p(1)$.
Consequently,
$\sqrt{n}\,(\widehat D_b - D_b)\ \xrightarrow{d}\ N\!\left(0,\ \text{Var}(\psi_{bi})\right)$ as $n\rightarrow \infty$.
Moreover, the variance of $\psi_{bi}$ can be consistently estimated by
$\sum_{i=1}^n \widehat{\psi}_{bi}^2/n$, where $\widehat{\psi}_{bi}$ is the empirical version of $\psi_{bi}$ with the unknown parameters replaced by their corresponding consistent estimators. 
\end{theorem}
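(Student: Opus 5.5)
The plan is to write $\widehat D_b$ as a (rescaled) second-order $U$-statistic in the i.i.d.\ cluster vectors ${\bm Y}_1,\dots,{\bm Y}_n$ and then apply the Hoeffding decomposition. The normalizing constant $c_n = (N_1N_2-\sum_g m_{g1}m_{g2})/\binom{n}{2}$ is itself random. By definition,
\[
\widehat D_b=\binom{n}{2}^{-1}\sum_{k<i}h({\bm Y}_i,{\bm Y}_k).
\]
It is cleaner to separate numerator and denominator. Set $g({\bm Y}_i,{\bm Y}_k)=\Phi_{ik}+\Phi_{ki}$, which is a symmetric kernel with no dependence on $n$. Then $U_n=\binom{n}{2}^{-1}\sum_{k<i}g({\bm Y}_i,{\bm Y}_k)$ is a genuine $U$-statistic, and so is $c_n=\binom{n}{2}^{-1}\sum_{k<i}(m_{i1}m_{k2}+m_{k1}m_{i2})$, with kernel $q({\bm Y}_i,{\bm Y}_k)=m_{i1}m_{k2}+m_{k1}m_{i2}$. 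Under the definition of $D_b$ as the pairwise probability for a randomly selected cross-cluster pair, we have $E g = D_b\, E q$, and therefore $\widehat D_b=U_n/c_n$ estimates $D_b$.

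The first step is the Hoeffding projection of each $U$-statistic, which I would take from standard $U$-statistic theory (e.g.\ Hunter's notes, as already cited for Theorem 10.13). Assuming $E g^2<\infty$ and $E q^2<\infty$, which follows from $E m_i^4<\infty$ since $0\le\phi\le1$, we get
\[
\sqrt n(U_n-\mu_g)=\frac{2}{\sqrt n}\sum_i\{g_1({\bm Y}_i)-\mu_g\}+o_p(1),
\]
where $g_1({\bm y})=E[g(\widetilde{\bm Y},{\bm y})]$. The analogous expansion holds for $c_n$ around $\mu_q$.

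The second step is the delta method for the ratio, giving
\[
\sqrt n(\widehat D_b-D_b)=\mu_q^{-1}\sqrt n\{(U_n-\mu_g)-D_b(c_n-\mu_q)\}+o_p(1).
\]
The resulting summand is $\frac{2}{\mu_q}E[g(\widetilde{\bm Y}_i,{\bm Y}_i)-D_b\,q(\widetilde{\bm Y}_i,{\bm Y}_i)\mid{\bm Y}_i]$. I then need to check that this coincides with the stated $\psi_{bi}=2E[h(\widetilde{\bm Y}_i,{\bm Y}_i)-D_b\mid{\bm Y}_i]$ when $h$ is read with its normalizing constant replaced by its probability limit $\mu_q$, so that $h=g/\mu_q$. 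This identification is where I expect the main obstacle. Read literally, the theorem's $h$ uses the data-dependent constant, and the $-D_b$ term must absorb the random-denominator correction. Under equal or fixed cluster-arm sizes, $q$ is constant and the two expressions agree exactly. For random sizes, I would state the result with $\psi_{bi}$ including the $-D_b q/\mu_q$ term, interpreting ``$h-D_b$'' as $(g-D_b q)/\mu_q$, and verify this gives mean zero. Once this is settled, $E\psi_{bi}=0$ holds, and $\sqrt n\,\mathbb P_n\psi_b$ is a normalized sum of i.i.d.\ mean-zero finite-variance terms. The classical CLT and Slutsky's theorem then yield $N(0,\mathrm{Var}(\psi_{bi}))$, which also justifies calling $\psi_{bi}$ the influence function.

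The final step is variance consistency. Define
\[
\widehat\psi_{bi}=\frac{2}{c_n}\,\frac{1}{n-1}\sum_{k\ne i}\{g({\bm Y}_k,{\bm Y}_i)-\widehat D_b\, q({\bm Y}_k,{\bm Y}_i)\}.
\]
I would show $\frac1n\sum_i(\widehat\psi_{bi}-\psi_{bi})^2\to_p0$ by adding and subtracting the leave-one-out average with true parameters. The term $\frac1n\sum_i\big(\frac{1}{n-1}\sum_{k\ne i}g(\cdot)-g_1({\bm Y}_i)\big)^2$ is $O_p(1/n)$ by a direct second-moment computation, since its expectation is $\mathrm{Var}(g)/(n-1)$. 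The plug-in errors $c_n-\mu_q$ and $\widehat D_b-D_b$ are $o_p(1)$ and multiply terms with bounded averages. Then $\frac1n\sum\psi_{bi}^2\to\mathrm{Var}(\psi_{bi})$ by the weak law of large numbers, and Cauchy--Schwarz closes the argument.
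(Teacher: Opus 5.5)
Your argument is correct, but it takes a genuinely different route from the paper's.

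\textbf{What the paper does.} Supplement B applies the H\'ajek projection directly to $\widehat D_b$, viewed as a $U$-statistic with kernel $h$. It implicitly treats the data-dependent factor $\binom{n}{2}/(N_1N_2-\sum_g m_{g1}m_{g2})$ inside $h$ as nonrandom. It asserts that the remainder is negligible, and it does not prove the variance-consistency claim at all.

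\textbf{What your route adds.} Writing $\widehat D_b=U_n/c_n$ as a ratio of two fixed-kernel $U$-statistics and applying the delta method makes the argument rigorous when cluster sizes and allocations are random. Your leave-one-out argument also supplies the missing consistency proof for $n^{-1}\sum_i\widehat\psi_{bi}^2$.

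\textbf{Your identification caveat is correct.} It is exactly what the paper's route hides. Let $q_1(\bm Y_i)=E\{q(\widetilde{\bm Y},\bm Y_i)\mid\bm Y_i\}=m_{i1}E(m_{k2})+m_{i2}E(m_{k1})$ for $k\neq i$. With $h=g/\mu_q$, the paper's $\psi_{bi}$ differs from yours by $2D_b\{q_1(\bm Y_i)-\mu_q\}/\mu_q$. So the stated expansion holds precisely when $q_1(\bm Y_i)$ is almost surely constant. Otherwise $c_n$ has a nondegenerate $n^{-1/2}$ fluctuation, which the stated $\mathrm{Var}(\psi_{bi})$ ignores.

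\textbf{The exact condition is weaker than your ``$q$ constant''.} It holds whenever all clusters share a common size $m$ and $E(m_{k1})=E(m_{k2})$. That covers the one-group, two-group and mixture simulation designs, so the two formulas coincide there. It fails with variable cluster sizes, as in the MINVI data, or with unequal allocation. Relatedly, ``fixed'' sizes do not rescue things if they are unequal: then $q$ is not constant, and the clusters are no longer identically distributed.

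\textbf{Minor points.}
\begin{itemize}
\item $E(m_i^2)<\infty$ already suffices. Since $0\le g\le q\le 2m_im_k$, independence of clusters gives $E q^2\le 4\{E(m_i^2)\}^2$.
\item The expectation of your leave-one-out error term is $E\{\mathrm{Var}(g(\widetilde{\bm Y},\bm Y_i)\mid\bm Y_i)\}/(n-1)$. This is bounded by $\mathrm{Var}(g)/(n-1)$ rather than equal to it, which is harmless for your argument.
\item You need $\mu_q>0$.
\end{itemize}
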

\noindent
In Theorem 1, we derive the asymptotic variance for $\widehat{D}_b$ using influence functions. Detailed proofs are given in Supplement B. When the number of clusters is small, but $n>=6$, we use the small-sample correction technique described in subsection 2.1.2. Our estimator $\widehat{D}_b$ is similar to the Mann-Whitney statistic for comparisons between clusters  proposed by \citet{larocqueTwoSampleTests2010b}. However, we incorporate within- and between-cluster treatment effects described in the next section, whereas \citet{larocqueTwoSampleTests2010b} focused on the between-cluster treatment effect.

\vspace{-6mm}

\subsection{Within-Cluster versus Between-Cluster DOOR Probabilities}
\vspace{-6mm}

There is a clear distinction between the definitions of within- and between-cluster DOOR probabilities. The within-cluster DOOR probability captures the treatment effect within the same cluster and controls for observed or unobserved cluster-specific factors that are constant within the cluster. On the other hand, 
the between-cluster DOOR probability reflects how differences in the level of  treatment between clusters are related to differences in the DOOR endpoint between those same clusters. In a simple linear regression setting with continuous endpoints, the between-cluster treatment effect is essentially 
the slope of the cluster-level regression of outcomes on treatments \citep{neuhaus1998between, roberts2005design}. In the absence of between-cluster variability, all observations within and across clusters are mutually independent,  and hence these two DOOR probabilities are equivalent. However, when there is between-cluster variability, the between-cluster DOOR probability can be confounded by cluster-level characteristics that correlate with both treatment assignments and outcomes. The consequences of using the between-cluster treatment effect (or the between-cluster DOOR probability here) can be particularly severe when the between-cluster variability is large and the number of clusters is small. We demonstrate the potential fallacy of between-cluster DOOR probabilities using several examples, presented in Figure \ref{fig:between_clu_vari}. In the top-left panel, there is a clear treatment effect within each cluster but the outcomes in cluster 2 are systematically better than those in cluster 1, regardless of treatment assignment. Using $D_b$ will fail to detect the true treatment effect, leading to a false negative result. In the top-right panel, there is no treatment effect within each cluster. However, due to the imbalance in treatment assignment between these two clusters (more patients in the control arm in cluster 1 than cluster 2) and a large between-cluster variability such that outcomes in cluster 2 are systematically better than those in cluster 1, using $D_b$ will detect a spurious treatment effect, leading to a false positive result. The lower two panels present examples from standard parallel cluster randomized trials, where the gray dots indicate the potential outcomes of patients had they received the opposite treatment. Using $D_b$ will lead to a false negative result under the scenario in the bottom-left panel, whereas it will lead to a false positive result under the scenario in the bottom-right panel. These examples highlight that the between-cluster DOOR probability can be driven by cluster-level factors that are unrelated to the direct effect of treatment on an individual.

\section{Hypothesis Testing}
\subsection{Assess the between-cluster variability}
As a byproduct of the theoretical properties of  the estimators of the within- and between-cluster DOOR probabilities, $\widehat{D}_w$ and $\widehat{D}_b$, we can assess the between-cluster variability by testing $H_0: D_b-D_w=0$. A natural test statistic is  the following Wald-type statistic:
$\widehat{W}_v=(\widehat{D}_b - \widehat{D}_w)/\widehat{\text{S.E.}}(\widehat{D}_b - \widehat{D}_w)$. Based on the influence function representation, we can show that $\sqrt{n}(\widehat{D}_b - D_b,\widehat{D}_w-D_w)$ converges in distribution to a bivariate normal distribution with mean ${\bf 0}$ 
and variance-covariance matrix 
\[
{\boldsymbol{\Sigma}} = 
    \begin{pmatrix}
        \text{Var}(\psi_{bi}) & \text{Cov}(\psi_{bi}, \psi_{wi}) \\
        \text{Cov}(\psi_{wi}, \psi_{bi}) & \text{Var}(\psi_{wi})\\
    \end{pmatrix},
\]
where $\psi_{wi}$ and $\psi_{bi}$ are the previously defined influence functions for $D_w$ and $D_b$, respectively.  We replace the unknown parameters in $\boldsymbol{\Sigma}$ with their consistent estimators and obtain a consistent estimator of $\boldsymbol{\Sigma}$, denoted by $\widehat{\boldsymbol{\Sigma}}$. By linear contrast, we have $\widehat{\text{S.E.}}(\widehat{D}_b - \widehat{D}_w)
=\sqrt{\boldsymbol c^\top \widehat{\boldsymbol\Sigma} \boldsymbol c/n}$, where $\boldsymbol{c}=(1,-1)^\top$. Under the null hypothesis that there is no between-cluster variability, $W_v$ converges to a standard normal distribution as $n\rightarrow \infty$. The small-sample correction method described in the previous sections can also be applied here. In a similar fashion, we can define a Wald type statistic based on $(\widehat{D}_b, \widetilde{D}_w)$. We omit the details here.

\subsection{Simultaneous testing using $L_\infty$ norm}

When there is little or no between-cluster variability, neither the within-cluster DOOR probability alone nor the between-cluster DOOR probability alone can fully capture the treatment effect when observations are assigned to two treatment groups within clusters. It is ideal to incorporate all available information in a test. Specifically, we consider $H_0: D_b=D_w=0.5$. 
\\\\
If the null hypothesis of $D_b-D_w=0$ is not rejected, hybrid methods that incorporate all available information are preferred. To test $H_0: D_w=D_b=0.5$, 
we first define two  test statistics
based on the estimators $\widehat{D}_b$ and $\widehat{D}_w$: 
$\widehat{W}_b = (\widehat{D}_b - 0.5)/\widehat{\text{S.E.}}(\widehat{D}_b)$ and $\widehat{W}_w = (\widehat{D}_w - 0.5)/\widehat{\text{S.E.}}(\widehat{D}_w)$. Under the null hypothesis,  the vector $(\widehat{W}_b,\widehat{W}_w)$ asymptotically follows a bivariate normal distribution with mean $\boldsymbol{0}$ and  variance-covariance matrix $\boldsymbol{R}$ with diagonal elements 1 and off-diagonal elements $\rho = \text{Cov}(\psi_{bi}, \psi_{wi})/\sqrt{\text{Var}(\psi_{bi})\text{Var}(\psi_{wi})}$. 
To test $H_0: D_b=D_w=0.5$, we consider the test that controls for the familywise error rate (FWER) \citep{tukey1953problem}. Specifically, 
we define $\widehat{W}_{max}=\text{max}(\widehat{W}_w,\widehat{W}_b)$,  when $H_1$ is one-sided. A critical value $c$ adjusted for multiple tests should satisfy $P(\widehat{W}_{max}\le c \mid H_0)
= P(\widehat{W}_b \le c,\, \widehat{W}_w \le c |H_0)= 1 - \alpha$. A simultaneous confidence interval can also be constructed using the above critical value $c$. When $H_1$ is two-sided, we define $\widehat{W}_{max}=\text{max}(|\widehat{W}_w|,|\widehat{W}_b|)$, which can be considered the $L_\infty$ norm of $\widehat{W}_w$ and $\widehat{W}_b$, and the critical value $c$  satisfies $P(\widehat{W}_{max}\le c \mid H_0)= P(-c \le \widehat{W}_b \le c, -c \le \widehat{W}_w \le c \mid H_0)=1 - \alpha$. The probabilities here involve two-dimensional integrals that can be calculated using the {\it mvtnorm} R package. If the number of clusters is small, we use the small-sample correction described before and replace the bi-variate normal distribution with a bi-variate $t$-distribution in determining  $c$. 


\subsection{Weighted average of $D_b$ and $D_w$}

Alternative to the test controlling for FWER described above, we consider  
a new statistic based on the weighted average of $\widehat{D}_b$ and $\widehat{D}_w$, denoted by $\widehat{D}_{wt}= a_1\widehat{D}_w+a_2\widehat{D}_b$, where $a_1, a_2 \geq 0$ and $a_1+a_2=1$. 
Rather than fixing the values of the weights $a_1$ and $a_2$, we adopt the approach in fixed-effects multivariate meta-analysis using data-driven weights. The optimal weights that minimize the variance of $\widehat{D}_{wt}$ are $(a_1, a_2)^T = \frac{\boldsymbol{\Sigma}^{-1}\boldsymbol{j}}{\boldsymbol{j}^T\boldsymbol{\Sigma}^{-1}\boldsymbol{j}}$, where $\boldsymbol{j}$ is a vector of elements 1. The resulting variance of $\widehat{D}_{wt}$ is then approximately $\frac{1}{n\boldsymbol{j}^T\boldsymbol{\Sigma}^{-1}\boldsymbol{j}}$. With a slight abuse of the notation, we write $\widehat{D}_{wt} = \widehat{a}_1 \widehat{D}_w + \widehat{a}_2 \widehat{D}_b$ and consider the following test statistic $\widehat{W}_{wt} =\sqrt{n\boldsymbol{j}^T\widehat{\boldsymbol{\Sigma}}^{-1}\boldsymbol{j}}(\widehat{D}_{wt} - 0.5) $ for testing $H_0:D_b=D_w=0.5$.
Under the null hypothesis, $\widehat{W}_{wt}$ is asymptotically $N(0,1)$. Similarly, we can define $\widetilde{W}_{wt}$ based on $\widehat{D}_b$ and $\widetilde{D}_w$ and apply the small-sample correction.

\section{Simulation Study}

We conducted extensive simulation studies to examine the performance of the proposed estimation and inference methods. In the first set of simulations, we assess the estimation of the proposed within-cluster DOOR probability using optimal weights, within-cluster DOOR probability using weights proportional to the number of pairs in a cluster, and between-cluster DOOR probability under the alternative hypothesis. We generate the outcome data under a latent-variable random-effects model given by
\begin{align*}
    X_{ij}=\beta I(A_{ij}=1)+\alpha_i+\delta_{ij}\\
    i=1,2,...,n,\quad j=1,2,..., m_{i1}+m_{i2}
\end{align*}
where $\beta$ is the conditional treatment effect, $\alpha_i$ is a random intercept accounting for the common effect for cluster $i$, and $\delta_{ij}$ is an individual random error. We assume that  $\alpha_i$s and $\delta_{ij}$s are mutually independent from normal distributions $N(0,\sigma_{\alpha}^2)$ and $N(0,\sigma_{\delta}^2)$, respectively. 
Then, the DOOR rank of the $j$th participant in the $i$th cluster, $Y_{ij}$,  is defined based on $X_{ij}$ and pre-specified cut-points that lead to the target distribution of $Y_{ij}$. Five DOOR ranks are specified, with proportions 10\%, 20\%, 30\%, 25\%, and 15\% for ranks 1 to 5 in the control group, respectively. The cut-points are determined by the corresponding quantiles of the distribution of $X_{ij}$ in the control group, and the same cut-points are applied to both treatment and control groups. We set $\beta = 0.1$, $\sigma_{\alpha}^2=\rho_c/(1-\rho_c)$, and $\sigma_{\delta}^2=1$. It is clear that  the correlation between $X_{ij}$ and $X_{il}$ $(j\neq l)$ is equal to $\rho_c$. Then, we consider six different levels of within-cluster correlations: $\rho_c=0.001, 0.02, 0.06, 0.1, 0.3$, or $0.5$. With $\beta = 0.1$, the corresponding true values of $D_b$ are 0.5266, 0.5263, 0.5258, 0.5252, 0.5223, and 0.5188, and the true values of $D_w$ are 0.5266, 0.5266, 0.5265, 0.5264, 0.5257, and 0.5246.

\noindent
We consider two settings for the number of clusters. For the large number of cluster settings, we use 200, 100, and 50 clusters with corresponding cluster sizes of 4, 8, and 16, maintaining a total sample size of 800. For the small number of cluster settings, we use 10, 15, and 20 clusters with corresponding cluster sizes of 60, 40, and 30, maintaining a total sample size of 600. For each sample size setting, three treatment assignment scenarios are examined: (1) One-group randomization: all subjects within a cluster are assigned to the same group, with clusters randomized to either the treatment or control arm; (2) Two-group randomization: subjects within each cluster are independently randomized to the treatment or control group in a 1:1 ratio; (3) Mixture assignment: approximately 25\% of clusters are assigned entirely to the treatment group, another 25\% entirely to the control group, and the subjects in each cluster of the remaining 50\% of clusters are randomized to either treatment or placebo in a 1:1 ratio. All simulations are based on 10,000 replicates.
\\\\
 Table 1 presents the bias, the standard deviation of the estimates, the average of the standard error estimates, and the coverage probability of the 95\% confidence interval for the large-sample scenario with 100 clusters and 8 observations per cluster. Under large-sample scenarios, no corrections or adjustments are applied. In the one-group randomization scenario, only $D_b$ can be estimated. The results are shown for correlation coefficients of 0.01, 0.02, 0.06, 0.1, 0.3, and 0.5. Under the two-group and mixture randomization scenarios, $\widetilde{D}_w$ and $\widehat{D}_b$ are used to estimate $D_w$ and $D_b$, and results are reported for a correlation coefficient of 0.1. Additional results for other correlation coefficients and the remaining large sample scenario are provided in Supplement C.1.  Table 2 summarizes the results for the  small-sample scenario with 10 clusters and 60 observations per cluster. The estimators $\widehat{D}_w$, $\widetilde{D}_w$ with the Type 2 method, $\widetilde{D}_w$ with the Type 3 method, and $\widehat{D}_b$ with small-sample corrections are considered. 
The results in Tables 1 and 2 indicate that the proposed estimators, in general, perform well in both large and small-sample sizes, regardless of whether the assignment is one-group, two-group, or a mixture. In particular, 
 $\widehat{D}_b$ with small-sample corrections performs well in all randomization scenarios, and $\widetilde{D}_w$ with the Type 2 method for standard error estimation performs better than the Type 3 method in the mixture setting. 
\\\\
In the second set of simulations, we evaluate the Type I error rates and powers of the proposed tests for testing $H_0: D_b=0.5$, $H_0: D_w=0.5$, and $H_0: D_w=D_b=0.5$. We set $\beta=0$ when evaluating Type I error rates, and $\beta=0.25$ when evaluating powers in large sample scenarios, and $\beta=0.30$ in small-sample scenarios. All other settings are the same as the first set of simulations. The nominal significance level is set to 0.05 in all cases.  Figures \ref{fig:large_1_TIE} to \ref{fig:large_mix_TIE} present Type I error rates for large-sample scenarios. For one-group and two-group randomization, Type I error rates for all methods are generally between 0.045 and 0.06. For two-group randomization, when the correlation is larger than 0.001, the Type I error rate for testing $H_0:{D}_w=0.5$ is consistently higher than those based on other estimators. Type I error rates are slightly inflated under mixture randomization in some scenarios. Figures \ref{fig:large_1_power} to \ref{fig:large_mix_power} display powers for large-sample scenarios. For one-group randomization, as expected, the power for the test based on $\widehat{W}_b$ increases as the correlation decreases and also increases as the number of clusters increases. In practice, when the total sample size is fixed, we suggest enrolling more clusters rather than increasing cluster size. For two-group randomization, the power for the test based on $\widetilde{W}_{w}$ is slightly lower than that of the other statistics, while the power for the test based on $\widetilde{W}_{wt}$ is consistently the highest. For mixture randomization, only half of the clusters contribute to the within-cluster DOOR probability. In most panels, the power of the test based on $\widetilde{W}_{w}$ is lower than that of the other methods. The power of the test based on $\widetilde{W}_{max}$ lies between that of $\widetilde{W}_{w}$ and $\widehat{W}_b$, while the test based on $\widetilde{W}_{wt}$ has the highest power. 

\noindent
The results on the Type I error rates and powers for the small-sample scenarios are provided in Supplement C.2. For one-group randomization, Type I error rates for testing $H_0:{D}_b=0.5$ without small-sample corrections  are inflated, particularly when there is a within-cluster correlation and the number of clusters is 10. For two-group randomization and mixture randomization, Type I error rates for testing $H_0: {D}_{w}=0.5$ without small-sample corrections are also inflated; consequently, tests based on $\widetilde{W}_{max}$ and $\widetilde{W}_{wt}$ also have inflated Type I error rates. In contrast, most methods with small-sample corrections control the Type I error rate well at the nominal level, except for the test based on $\widehat{W}_b$ under one-group randomization when the correlation is large. Since the number of clusters is small, as expected, the power under the mixture setting is lower than that under the two-group setting. 
\\\\
 When cluster sizes are small, estimates of $D_w$ by inverse variance weights described in Section 2.1.1, as well as the corresponding test statistics $\widehat{W}_{max}$ and $\widehat{W}_{wt}$, are not available. In such cases, the variances of standard DOOR probabilities for some clusters are zero, precluding the use of inverse-variance weighting. In scenarios with a small number of clusters, the variances of $\widetilde{D}_{w}$ approximated using the influence function (Type 1 method), as well as that of $\widehat{D}_b$, tend to perform poorly, with coverage probabilities below the nominal 95\% level and inflated Type I error rates. This also affects the performance of tests based on $\widetilde{W}_{max}$ and $\widetilde{W}_{wt}$. In this setting, estimating the variance of $\widetilde{D}_{w}$ using Type 2 method, as well as estimating the variances of $\widehat{D}_b$ and $\widetilde{D}_{w}$ with the small-sample correction based on influence functions described in Section 2.1.3 (Type 3 method), is feasible. As expected, the test based on $\widetilde{W}_{max}$ captures both within-cluster and between-cluster information and generally performs reliably, with performance typically between that of tests based on $\widetilde{W}_{w}$ and $\widehat{W}_{b}$ in terms of Type I error and power. Moreover, the test based on $\widetilde{W}_{wt}$ is recommended, as it preserves both within-cluster and between-cluster information and usually achieves the highest power, particularly under mixture randomization designs. Guidance for the choice of the proposed method in different practical settings is provided in Table 3.


\section{Application}

The MINVI study is a cluster randomized crossover trial conducted among 1730 newborns from 10 medical centers between January 2019 and May 2021 \citep{katheriaUmbilicalCordMilking2023}. The study hypothesized that umbilical cord milking (UCM) would reduce admission to the neonatal intensive care unit (NICU) compared with early cord clamping (ECC). Hospitals were randomized in a 1:1 ratio to UCM or ECC in period one until half of the required enrollment was reached, then crossed over to the other intervention in period two for the remaining participants. The DOOR endpoint was one of the exploratory endpoints, and it was reported in six ordinal levels: 1) Alive without any of the following events: hypoxic-ischemic encephalopathy (HIE), NICU admission, or cardiorespiratory support in the delivery room, 2) Alive with cardiorespiratory support in the delivery room, 3) Alive with NICU admission for predefined criteria, 4) Alive with mild HIE, 5) Alive with moderate to severe HIE, and 6) Death. In the initial DOOR analysis, cluster
assignment was ignored, although a sensitivity analysis was conducted using a stratified DOOR analysis for each cluster and then
combined using a meta-analysis-derived DOOR probability \citep{katheriaApplicationDesirabilityOutcome2024}, equivalent to the inverse variance weighted within-cluster DOOR probability estimator defined in Section 2.1.1.
\\\\
To illustrate the use of the proposed methodology, we reanalyzed the data using the study-defined DOOR endpoint. The data for the analysis consist of 1,524 newborns with complete measurements on all DOOR components. The number of newborns at each site ranged from 34 to 289. In this case, given the small number of clusters, Figure \ref{fig:MINVI_study} summarizes the estimated DOOR probabilities with 95\% confidence intervals using $\widehat{D}_{w}$, $\widetilde{D}_{w}$ with Type 2 method, and $\widehat{D}_b$ with corrections, as suggested in Table 3. All estimated DOOR probabilities suggest that UCM is more desirable than ECC under the study-defined DOOR endpoint. To assess the between-cluster variability, we tested $H_0: D_b-D_w=0$. Using estimators $\widehat{D}_w$ and $\widehat{D}_b$, the test statistics $\widehat{W}_v$ was 0.26, whereas using estimators $\widetilde{D}_w$ and $\widehat{D}_b$, the test statistics $\widetilde{W}_v$ was 1.62. We failed to reject the null hypothesis in both cases. Therefore, hybrid methods that incorporate all available information are recommended. Subsequently, to test $H_0: D_w=D_b=0.5$, simultaneous 95\% CIs are (0.521, 0.592) for $\widehat{D}_{w}$ and (0.499, 0.627) for $\widehat{D}_b$ with small-sample corrections. While the simultaneous 95\% CIs are (0.492, 0.568) for $\widetilde{D}_{w}$ with Type 2 method and (0.500, 0.626) for $\widehat{D}_b$ with small-sample corrections. A bi-variate $t$-distribution was used to determine the critical value in both cases. The results based on $\widehat{W}_{max}$ and all other results presented in Figure \ref{fig:MINVI_study} are consistent, suggesting that UCM is more desirable. Considering the crossover design, the within-cluster DOOR probabilities using inverse-variance weights were additionally calculated separately for sequence 1 (conducted UCM in period 1, then ECC in period 2) and sequence 2 (conducted ECC in period 1, then UCM in period 2) and are shown in the same figure. The results based on $\widehat{D}_w$ for sequences 1 and 2 are consistent with those of all other methods. UCM performs better regardless of whether it was performed in the first or second period. 

\section{Discussion}

In this article, we have developed methodologies for cluster randomized trials with the DOOR endpoints, taking into account different designs. We have introduced two estimands: the within-cluster and the between-cluster DOOR probabilities, and we have developed estimation and inference procedures for both estimands. Although we focus on DOOR probabilities, it is straightforward to extend the proposed methodology using the influence function representation and the delta method to relative measures such as win ratio and win odds commonly used in the survival analysis literature. 
\\\\
It is worth noting that the net benefit considered in \citet{fangSampleSizeDetermination2025} is equivalent to the between-cluster DOOR probability $D_b$ introduced here. There are, however, several important differences between the work of \citet{fangSampleSizeDetermination2025} and the proposed work. First, their approach was developed specifically for parallel cluster randomized trials, whereas the proposed methods incorporate all possible design options. For example, in eye trials, some participants contribute one eye to the study, while others contribute two eyes; the two eyes of a participant (i.e., a cluster) can be randomized to the same treatment arm or different arms. Second, \citet{fangSampleSizeDetermination2025} focuses on between-cluster comparisons, whereas both within- and between-cluster comparisons are allowed in the proposed methods. As demonstrated in simulation studies, between-cluster comparisons suffer from lower powers to detect the true treatment effect in the presence of moderate or large between-cluster variability. Finally, both approaches used similar procedures to correct for small-sample biases in the standard error estimates. \citet{fangSampleSizeDetermination2025} provided simulation results for $n$ ranging from 24 to 110. In contrast, we considered more realistic situations with $n$ ranging from 10 to 20 to evaluate the performance of the small-sample correction.  
\\\\
 In the present work, we assume equal weights for each $\Phi_{ik}$, that is, for each comparison between two clusters when estimating $D_b$. Equivalently, this assigns the same weight to each individual comparison $\phi_{ik}, (i\neq k)$. To handle highly imbalanced or correlated data, \citet{larocqueTwoSampleTests2010b} introduced a function of intra-cluster dependence and cluster size, $\lambda_{ik}$, and suggested using $w_{ik} \propto \frac{1}{\bigl(1 + (m_{i1} m_{k2} - 1)\lambda_{ik}\bigr)}$ or $w_{ik}=0$ when $m_{i1}=0$ or $m_{i2}=0$. Exploring such weighting strategies within the proposed framework could further improve the estimation in the presence of substantial cluster size imbalance. 
\\\\
There are several limitations with the proposed methods. First, we focus on analyzing data from cluster-randomized trials. It would be desirable to develop methods for sample size and power calculations across all possible design options for cluster-randomized trials. Second, interim monitoring is essential in clinical trials to assess safety, efficacy, futility, and trial integrity. As a future research topic, we will extend the group-sequential designs for eye trials \citet{Diao13112025} to generic cluster-randomized trials. Specifically, we will derive the information fractions and efficacy and futility boundaries at interim looks \citet{jennison2000group}.  Third, in real applications, we assume that there are no temporal effects under the crossover design, i.e., that the outcome does not change systematically over time, independent of the treatment itself. We will develop parametric or semiparametric methods to account for temporal effects in cluster randomized crossover trials. Finally, stepped-wedge cluster randomized trials have received a lot of attention in recent years. The proposed methods in their current form are not directly applicable. An extension of the proposed methods to stepped-wedge cluster randomized trials is ongoing.

\section*{Acknowledgments}
The authors thank Dr. Madeline Rice and Laure El Ghormli for their valuable suggestions regarding the MINVI study analysis.

\newpage
\bibliographystyle{asa}
\bibliography{Reference}

\newpage
\input{main_CP_tables/Large_100_8_H1_b1.0_20260417}
\input{main_CP_tables/Small_new_10_60_CP_H1_b1.0_20260417}
\input{Suggested_Methods_v2}

\newpage
\begin{figure}[H]
    \centering 
  \includegraphics[width=0.9\textwidth]{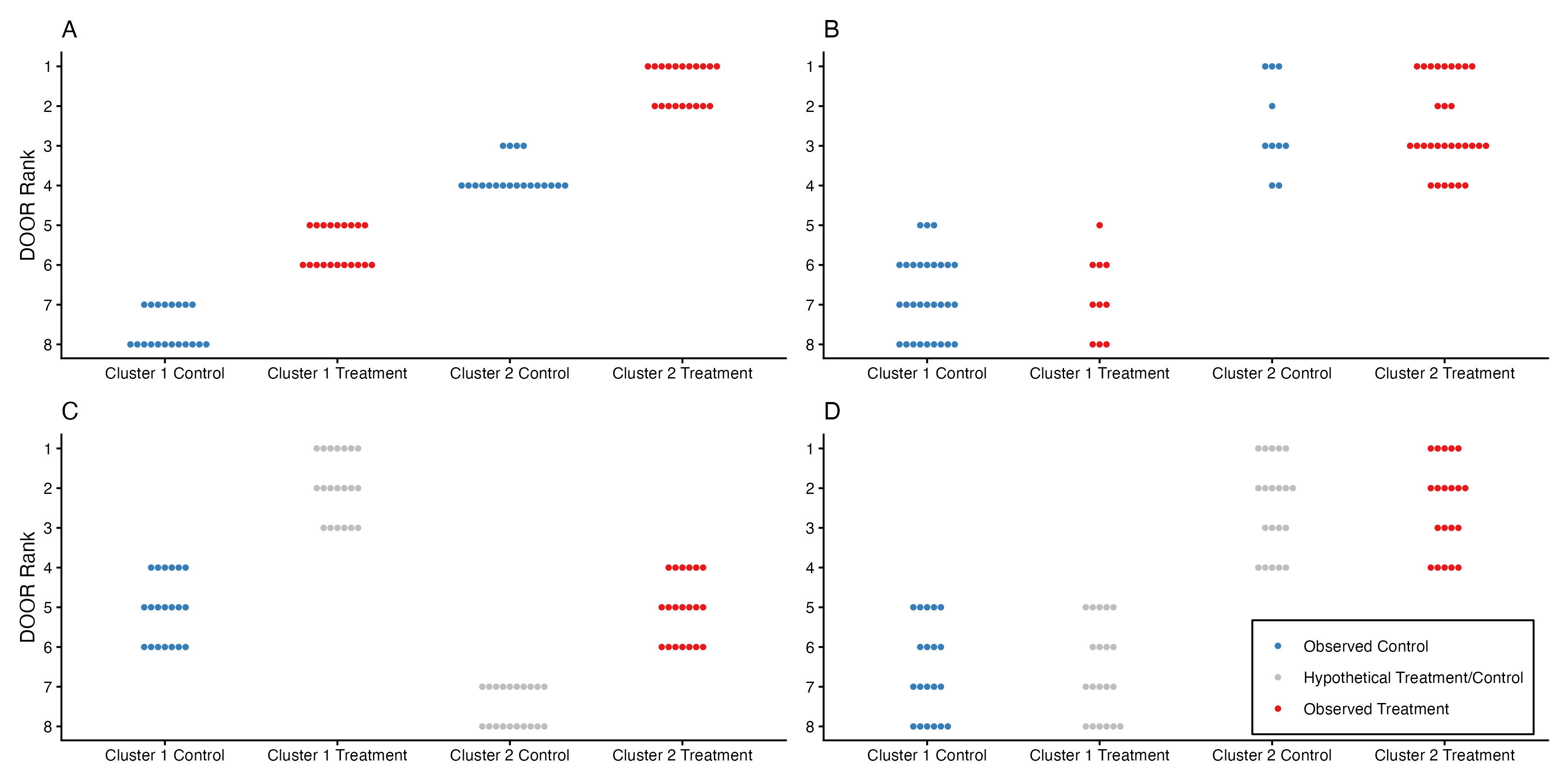}
    \caption{Examples of between-cluster variability. Each dot represents one observation. DOOR rank 1 is most desirable.} 
    \label{fig:between_clu_vari}
\end{figure}

\begin{figure}[H]
    \centering 
  \includegraphics[width=0.9\textwidth]{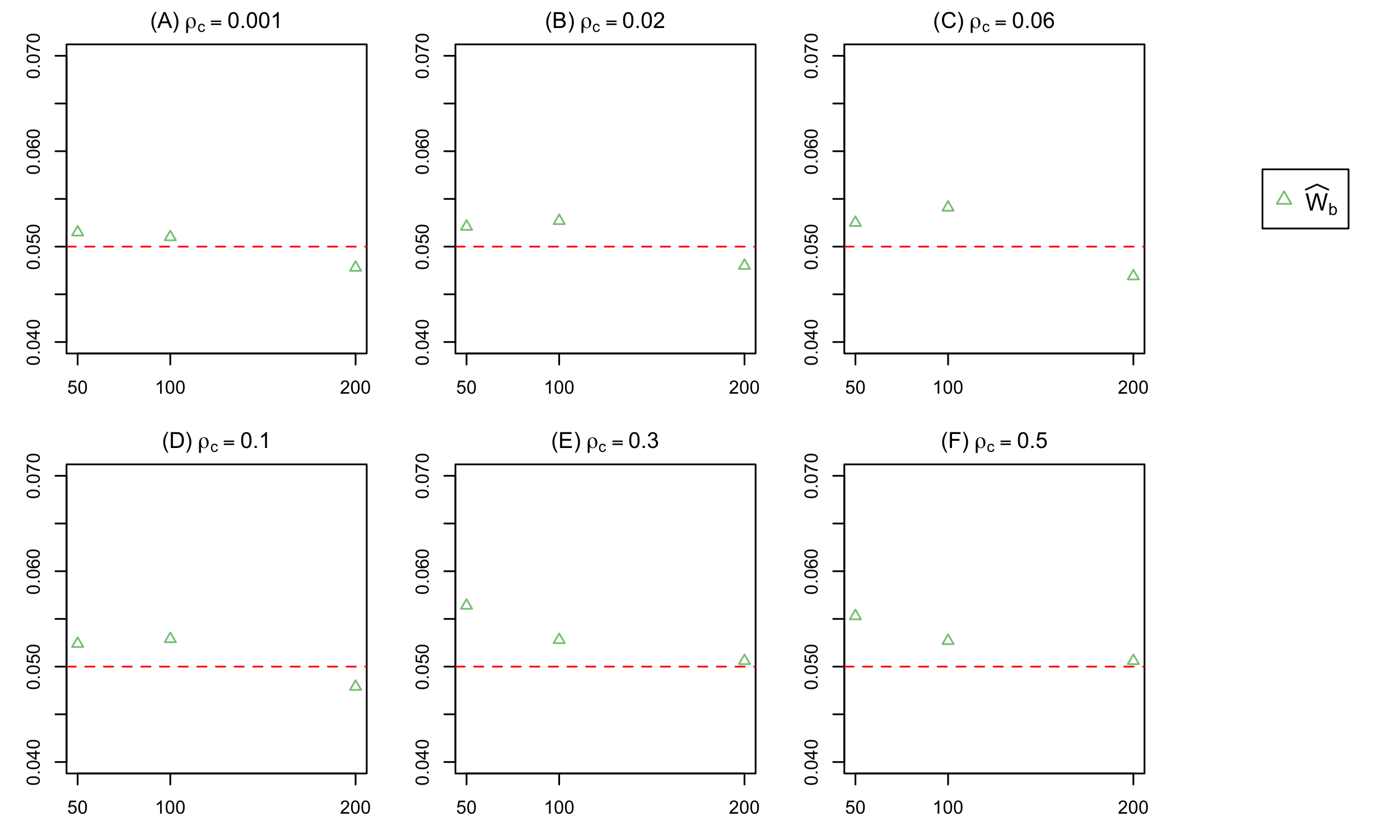}
    \caption{Type I error rates for the test based on $\widehat{W}_b$ for testing $H_0: D_b=0.5$  under the large sample size scenarios (1. $n$=50, $m$=16; 2. $n$=100, $m$=8; 3. $n$=200, $m$=4) for one-group randomization  based on  10,000 replicates. Panels (A)-(F) correspond to $\rho_c$=0.001, 0.02, 0.06, 0.1, 0.3, and 0.5, respectively. 
    } 
    \label{fig:large_1_TIE}
\end{figure}

\begin{figure}[H]
    \centering 
  \includegraphics[width=0.9\textwidth]{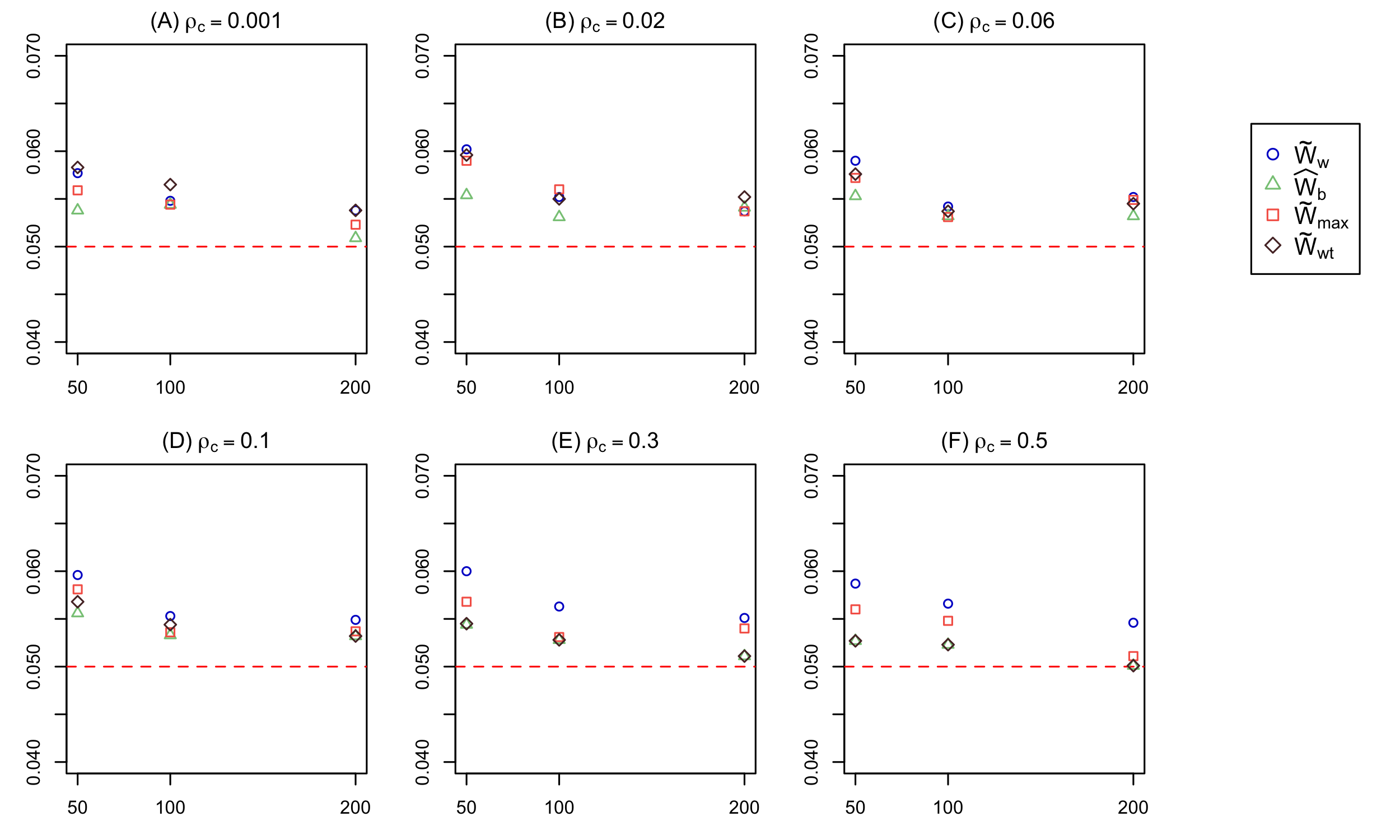}
    \caption{
    Type I error rates for the tests based on $\widetilde{W}_w$, $\widehat{W}_b$, $\widetilde{W}_{max}$ and $\widetilde{W}_{wt}$ for testing $H_0: D_w=0.5$, $H_0: D_b=0.5$, $H_0: D_b=D_w=0.5$, and $H_0: D_b=D_w=0.5$, respectively,  under the large sample size scenarios (1. $n$=50, $m$=16; 2. $n$=100, $m$=8; 3. $n$=200, $m$=4) for two-group randomization based on  10,000 replicates. Panels (A)-(F) correspond to $\rho_c$=0.001, 0.02, 0.06, 0.1, 0.3, and 0.5, respectively.
    } 
    \label{fig:large_2_TIE}
\end{figure}

\begin{figure}[H]
    \centering 
  \includegraphics[width=0.9\textwidth]{ 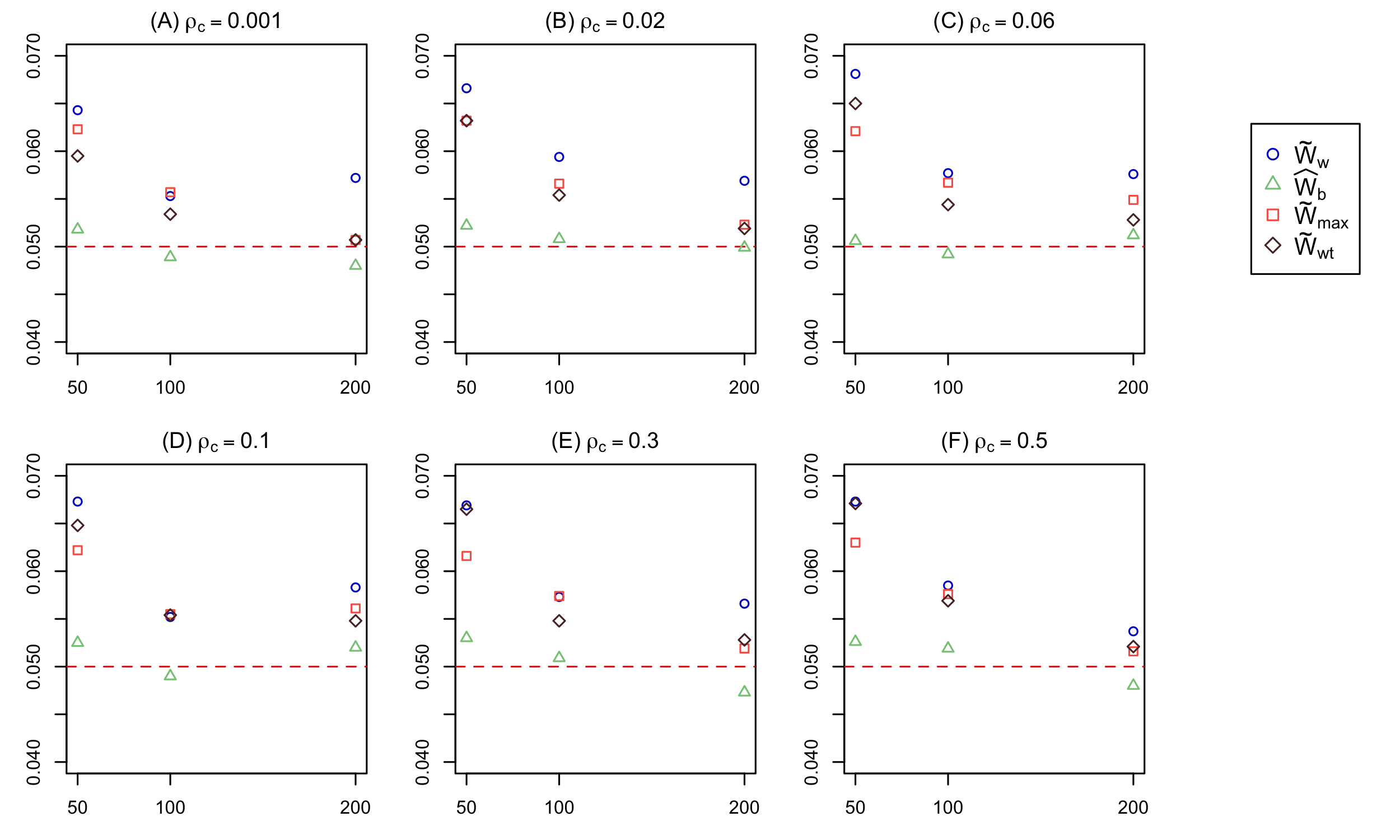}
    \caption{
    Type I error rates for the tests based on $\widetilde{W}_w$, $\widehat{W}_b$, $\widetilde{W}_{max}$ and $\widetilde{W}_{wt}$ for testing $H_0: D_w=0.5$, $H_0: D_b=0.5$, $H_0: D_b=D_w=0.5$, and $H_0: D_b=D_w=0.5$, respectively,  under the large sample size scenarios (1. $n$=50, $m$=16; 2. $n$=100, $m$=8; 3. $n$=200, $m$=4) for mixture randomization based on  10,000 replicates. Panels (A)-(F) correspond to $\rho_c$=0.001, 0.02, 0.06, 0.1, 0.3, and 0.5, respectively. 
    } 
    \label{fig:large_mix_TIE}
\end{figure}

\begin{figure}[H]
    \centering 
  \includegraphics[width=0.9\textwidth]{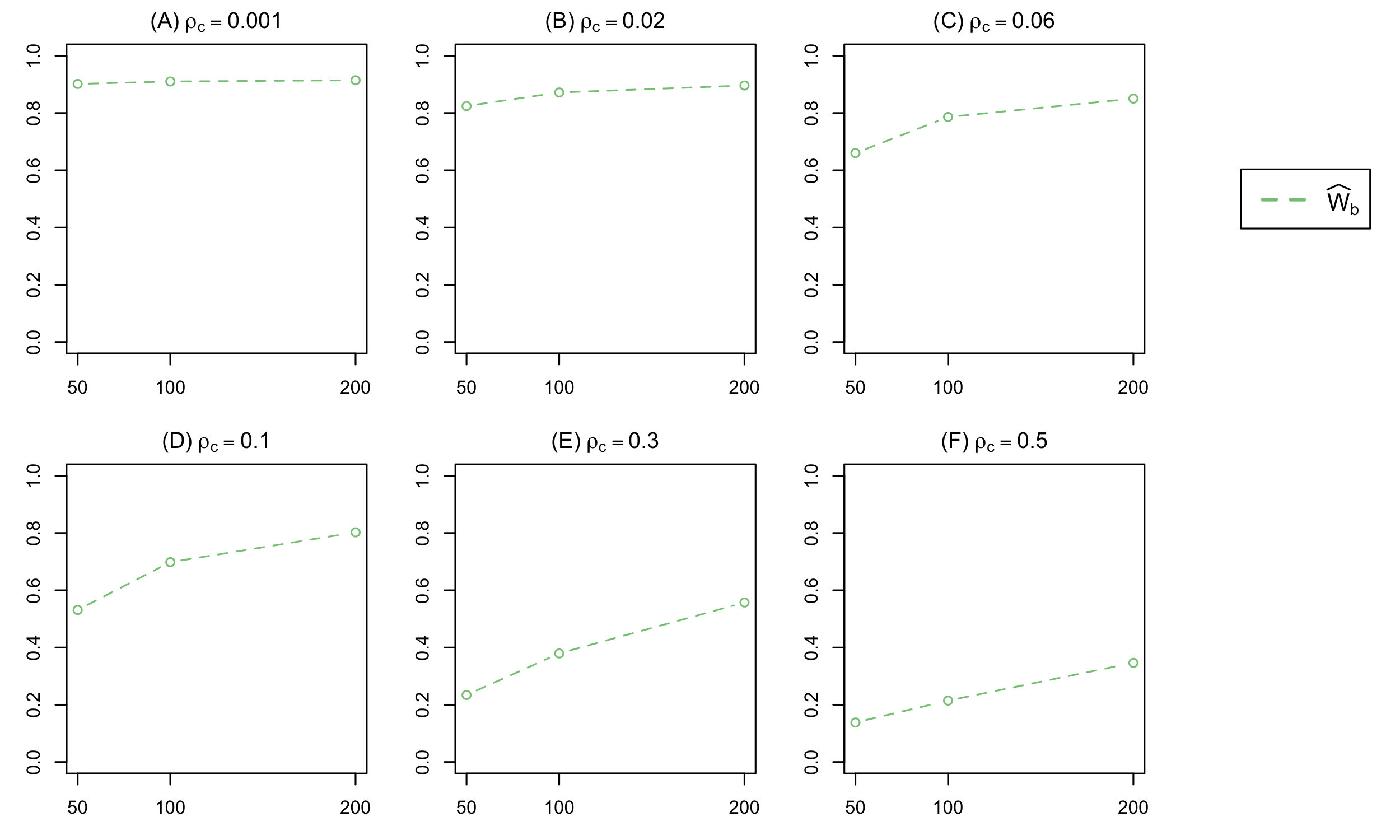}
    \caption{Powers for the test based on $\widehat{W}_b$ for testing $H_0: D_b=0.5$  under the large sample size scenarios (1. $n$=50, $m$=16; 2. $n$=100, $m$=8; 3. $n$=200, $m$=4) for one-group randomization  based on  10,000 replicates. Panels (A)-(F) correspond to $\rho_c$=0.001, 0.02, 0.06, 0.1, 0.3, and 0.5, respectively.} 
    \label{fig:large_1_power}
\end{figure}

\begin{figure}[H]
    \centering 
  \includegraphics[width=0.9\textwidth]{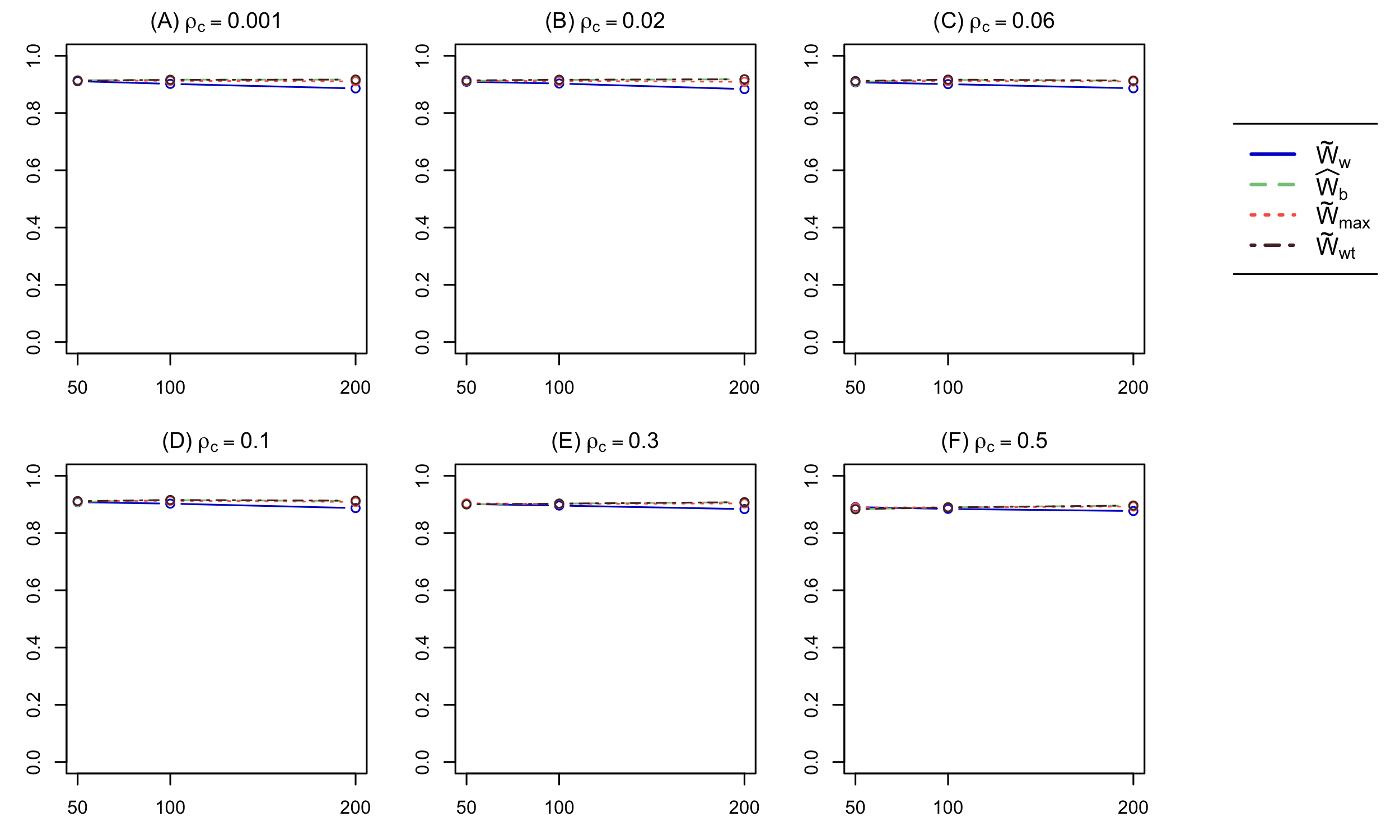}
    \caption{ Powers for the tests based on $\widetilde{W}_w$, $\widehat{W}_b$, $\widetilde{W}_{max}$ and $\widetilde{W}_{wt}$ for testing $H_0: D_w=0.5$, $H_0: D_b=0.5$, $H_0: D_b=D_w=0.5$, and $H_0: D_b=D_w=0.5$, respectively,  under the large sample size scenarios (1. $n$=50, $m$=16; 2. $n$=100, $m$=8; 3. $n$=200, $m$=4) for two-group randomization based on  10,000 replicates. Panels (A)-(F) correspond to $\rho_c$=0.001, 0.02, 0.06, 0.1, 0.3, and 0.5, respectively.} 
    \label{fig:large_2_power}
\end{figure}

\begin{figure}[H]
    \centering 
  \includegraphics[width=0.9\textwidth]{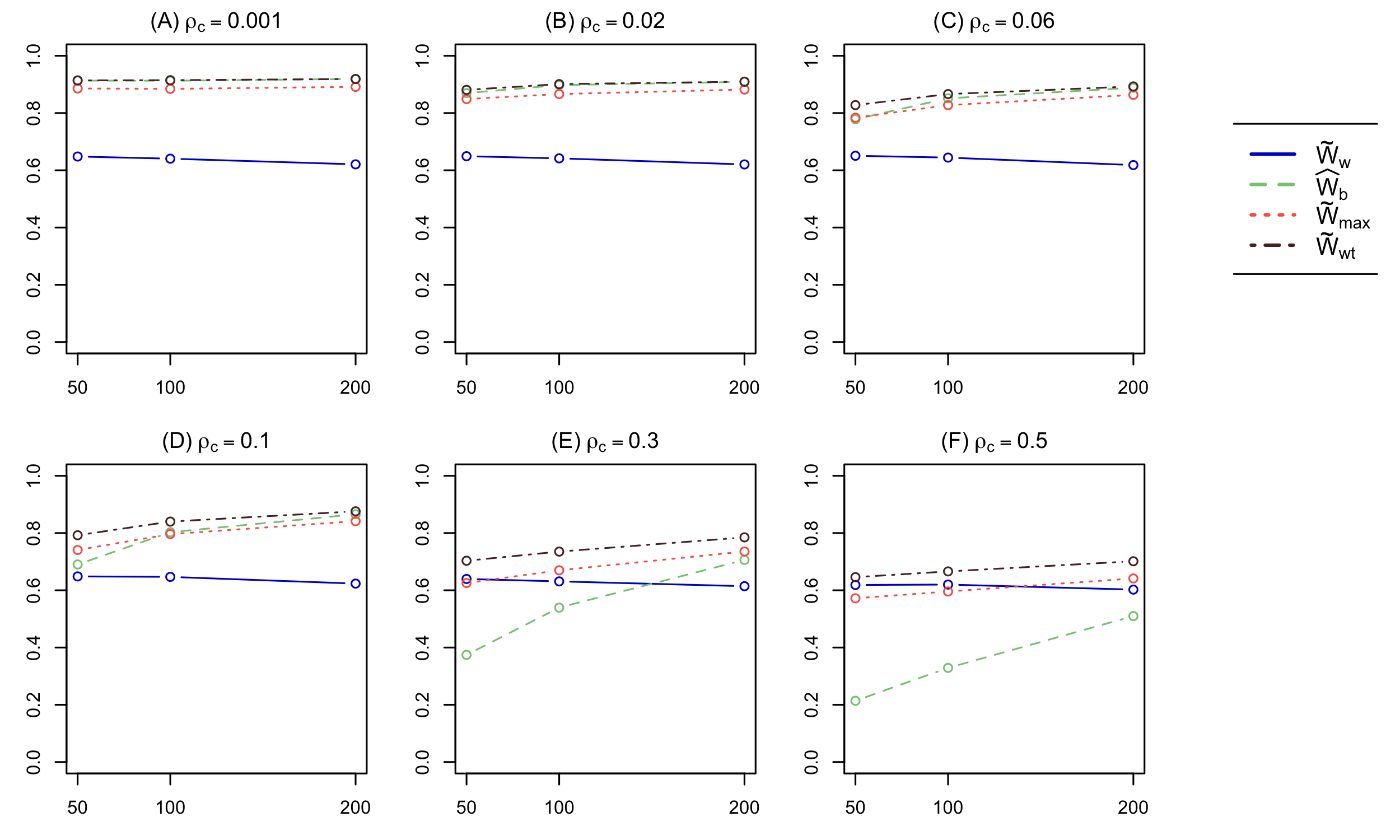}
    \caption{Powers for the tests based on $\widetilde{W}_w$, $\widehat{W}_b$, $\widetilde{W}_{max}$ and $\widetilde{W}_{wt}$ for testing $H_0: D_w=0.5$, $H_0: D_b=0.5$, $H_0: D_b=D_w=0.5$, and $H_0: D_b=D_w=0.5$, respectively,  under the large sample size scenarios (1. $n$=50, $m$=16; 2. $n$=100, $m$=8; 3. $n$=200, $m$=4) for mixture randomization based on  10,000 replicates. Panels (A)-(F) correspond to $\rho_c$=0.001, 0.02, 0.06, 0.1, 0.3, and 0.5, respectively.} 
    \label{fig:large_mix_power}
\end{figure}

\begin{figure}[H]
    \centering 
    \includegraphics[width=0.9\textwidth]{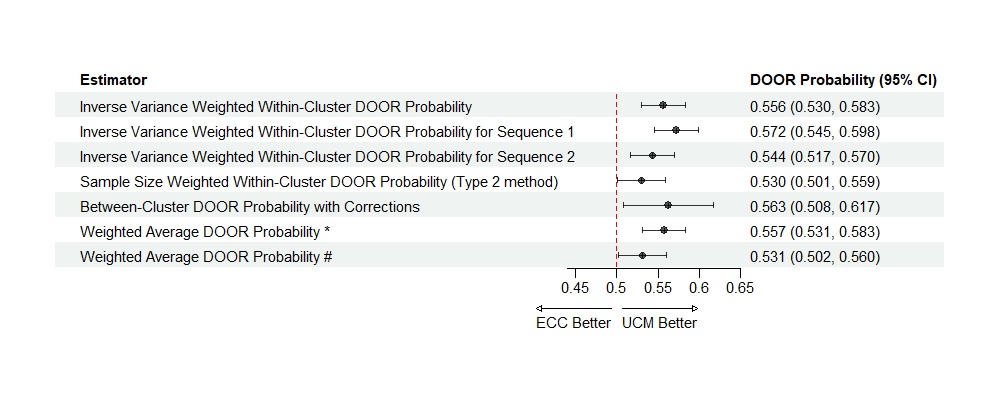}
    \caption{Results of the MINVI study DOOR analysis. $^*$ Weighted average of $\widehat{D}_w$ and $\widehat{D}_b$ with small-sample corrections. $^{\#}$  Weighted average of $\widetilde{D}_w$ (Type 2 method) and $\widehat{D}_b$ with corrections. } 
    \label{fig:MINVI_study}
\end{figure}

\newpage
\setcounter{page}{1}
\begin{center}
    {\Large\bf Supplement to ``On Cluster Randomized Trials with the Desirability of Outcome Ranking (DOOR) Endpoints"}
\end{center}
\section*{Supplement A}
Here we provide expressions of $\sigma_{10i}^2$ and $\sigma_{01i}^2$ described in section 2.1.
\[
    \sigma^2_{10i}
= \sum_{j_k=1}^{K-1} p_{1i,j_k}
\left(\sum_{l_k=j_k}^{K} p_{2i,l_k}\right)
\left(\sum_{l_k=j_k+1}^{K} p_{2i,l_k}\right)
+ \frac{1}{4}\sum_{j_k=1}^{K} p_{1i,j_k} \, p_{2i,j_k}^{2}
- D_{wi}^2
\]
where 
$p_{1i,j_k} = P(Y_{ij}=j_k \mid A_{ij}=1),
p_{2i,l_k} = P(Y_{il}=l_k \mid A_{il}=0),$
for \(j_k, l_k = 1,2,\ldots,K\). Similarly,
\[
\sigma_{01i}^2
=
\sum_{l_k=1}^{K-1} p_{2i,l_k}
\left(\sum_{j_k=l_k}^{K} p_{1i,j_k}\right)
\left(\sum_{j_k=l_k+1}^{K} p_{1i,j_k}\right)
+ \frac{1}{4}\sum_{l_k=1}^{K} p_{2i,l_k}\, p_{1i,l_k}^{2}
-\left(1 - D_{wi}\right)^2.
\]
The probability of a random selected patient $j$ in cluster $i$ from treatment group stays at DOOR level $j_k$ can be estimated by the observed frequencies $\hat{p}_{1i,j_k}=\sum_{j=1}^{m_{i1}} I(Y_{ij}=j_k)/m_{i1}$.
\citet{shuLongitudinalBenefitRisk2025} developed a framework for longitudinal DOOR, and they have proved the asymptotic variance of the estimator of the DOOR probability at a time point in the Appendix, which can be viewed as the asymptotic variance of the DOOR probability for a cluster here.

\renewcommand\thefigure{S\arabic{figure}}\setcounter{figure}{0}
\renewcommand\thetable{S\arabic{table}}\setcounter{table}{0}

\section*{Supplement B}
\begin{proof}
Since $\widehat D_b$ is a U-statistic with kernel $h(\boldsymbol{Y}_i,\boldsymbol{Y}_k)$, we can use the
projection method for U-statistics. The projection of $\widehat D_b-D_b$ onto the
space of statistics of the form $\sum_{i=1}^n g(\boldsymbol{Y}_i)$ is given by the Hájek
projection
$\widehat U
=
\sum_{i=1}^n E[\widehat D_b - D_b \mid \boldsymbol{Y}_i]$.
\\
Because $\widehat D_b = \binom{n}{2}^{-1}\sum_{i<k} h(\boldsymbol{Y}_i,\boldsymbol{Y}_k)$, conditioning on
$\boldsymbol{Y}_i$ retains only the terms involving $\boldsymbol{Y}_i$. Hence
\[
E[\widehat D_b-D_b \mid \boldsymbol{Y}_i]
=
\frac{2}{n}\,
E[h(\tilde{\boldsymbol{Y}}_i,\boldsymbol{Y}_i)-D_b \mid \boldsymbol{Y}_i].
\] Define
$\psi_{bi}=2\,E[h(\tilde{\boldsymbol{Y}}_i,\boldsymbol{Y}_i)-D_b \mid \boldsymbol{Y}_i].$
Then $\psi_{bi}$ is the influence function of $D_b$, and the projection becomes
$\widehat U=\frac{1}{n}\sum_{i=1}^n\psi_{bi}.$
Since $\psi_{bi}$ are i.i.d. with mean zero, the central limit theorem yields
\[
\sqrt{n}\widehat U
=
\frac{1}{\sqrt n}\sum_{i=1}^n \psi_{bi}
\xrightarrow{d}
N(0,\text{Var}(\psi_{bi})).
\]
The difference between $\widehat D_b-D_b$ and its projection $\widehat{U}=\frac{1}{n}\sum_{i=1}^n\psi_{bi}$ is asymptotically negligible,
\[
\sqrt{n}(\widehat D_b-D_b-\widehat U) \xrightarrow{p} 0.
\]
Therefore,
\[
\sqrt n(\widehat D_b-D_b)
=
\frac{1}{\sqrt n}\sum_{i=1}^n \psi_{bi} + o_p(1).
\]
By Slutsky's theorem,
\[
\sqrt n(\widehat D_b-D_b)
\xrightarrow{d}
N(0,\text{Var}(\psi_{bi})).
\]
\end{proof}

\section*{Supplement C.1}
\textbf{Large sample size scenarios.}
\input{supp_CP_tables/Large_100_8_H1_b1.0_corr_20260417}
\input{supp_CP_tables/Large_50_16_H1_b1.0_20260417}
\input{supp_CP_tables/Large_200_4_H1_b1.0_20260417}

\newpage
\textbf{Small-sample size scenarios without corrections.}
\input{supp_CP_tables/Small_old_10_60_CP_H1_b1.0_20260417}
\input{supp_CP_tables/Small_old_15_40_CP_H1_b1.0_20260417}
\input{supp_CP_tables/Small_old_20_30_CP_H1_b1.0_20260417}

\newpage
\textbf{Small-sample size scenarios with corrections or adjustments.}
\input{supp_CP_tables/Small_new_15_40_CP_H1_b1.0_20260417}
\input{supp_CP_tables/Small_new_20_30_CP_H1_b1.0_20260417}

\newpage
\section*{Supplement C.2}
\textbf{Small-sample size scenarios without corrections } \\
\textbf{Type I error rates.}
\begin{figure}[H]
    \centering 
  \includegraphics[width=0.9\textwidth]{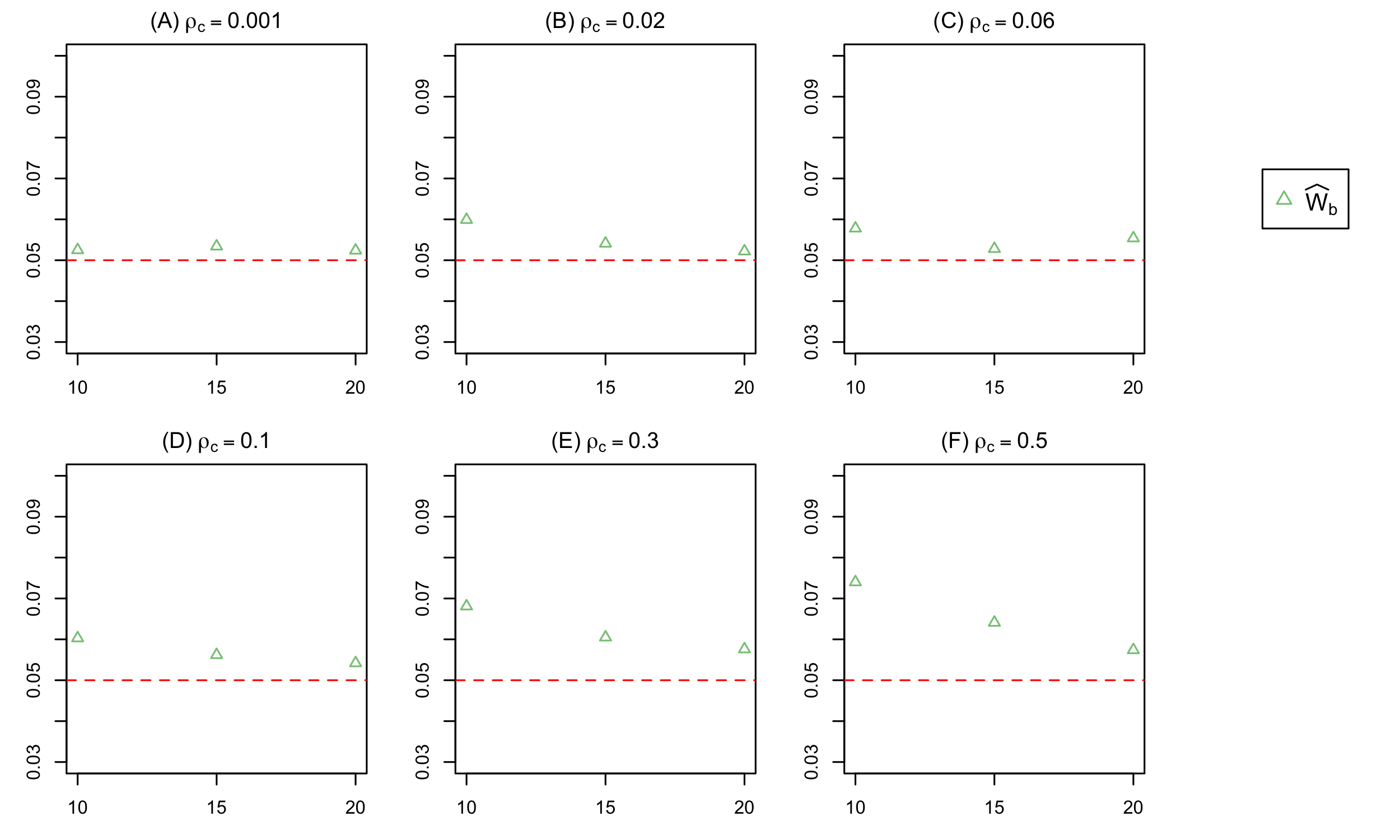}
    \caption{Type I error rates for the test based on $\widehat{W}_b$ without small-sample corrections for testing $H_0: D_b=0.5$  under the small-sample size scenarios (1. $n$=10, $m$=60; 2. $n$=15, $m$=40; 3. $n$=20, $m$=30) for one-group randomization  based on  10,000 replicates. Panels (A)-(F) correspond to $\rho_c$=0.001, 0.02, 0.06, 0.1, 0.3, and 0.5, respectively.} 
    \label{fig:centered}
\end{figure}

\begin{figure}[H]
    \centering 
  \includegraphics[width=0.9\textwidth]{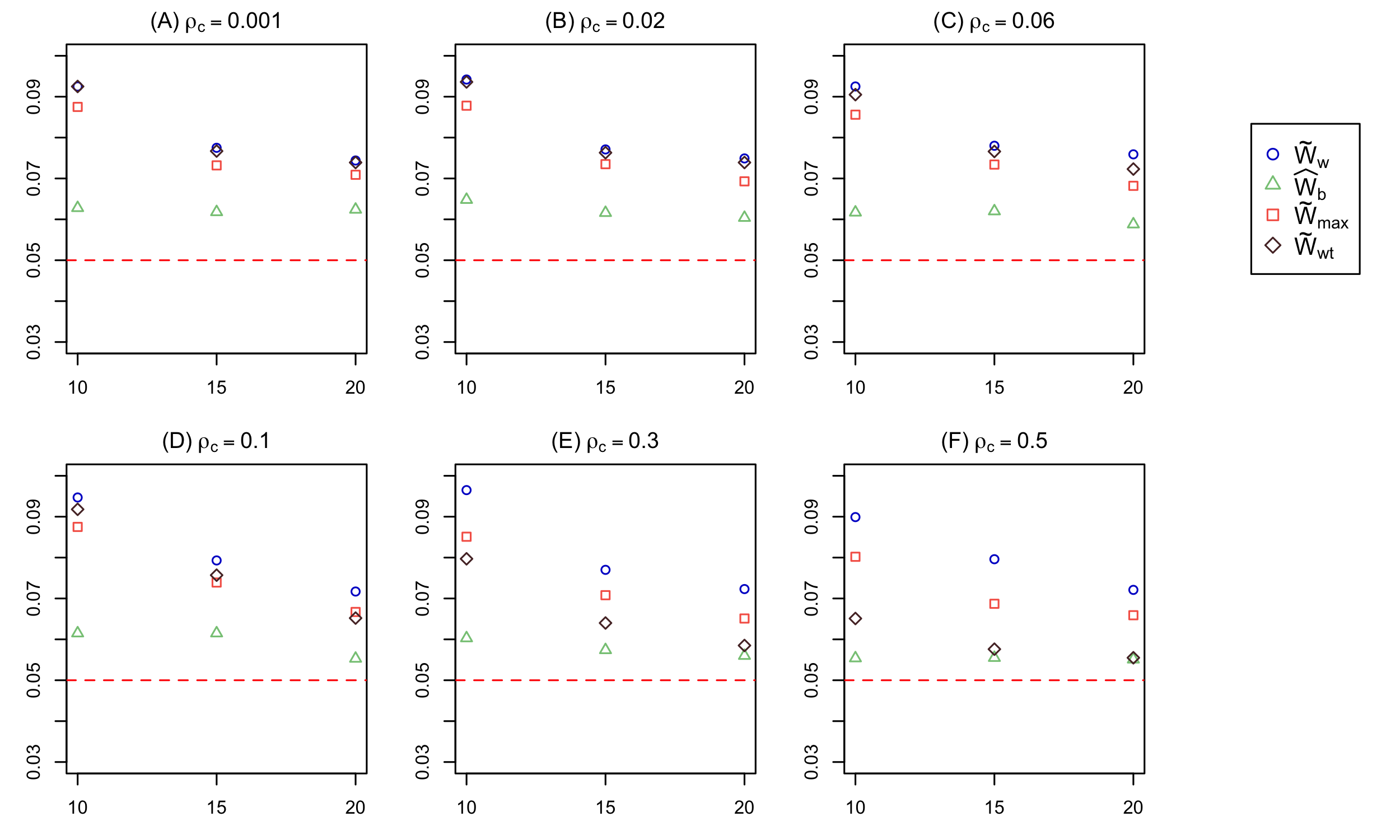}
    \caption{Type I error rates for the tests based on $\widetilde{W}_w$, $\widehat{W}_b$, $\widetilde{W}_{max}$ and $\widetilde{W}_{wt}$ without small-sample corrections for testing $H_0: D_w=0.5$, $H_0: D_b=0.5$, $H_0: D_b=D_w=0.5$, and $H_0: D_b=D_w=0.5$, respectively,  under the small-sample size scenarios (1. $n$=10, $m$=60; 2. $n$=15, $m$=40; 3. $n$=20, $m$=30) for two-group randomization based on 10,000 replicates. Panels (A)-(F) correspond to $\rho_c$=0.001, 0.02, 0.06, 0.1, 0.3, and 0.5, respectively.} 
    \label{fig:centered}
\end{figure}

\begin{figure}[H]
    \centering 
  \includegraphics[width=0.9\textwidth]{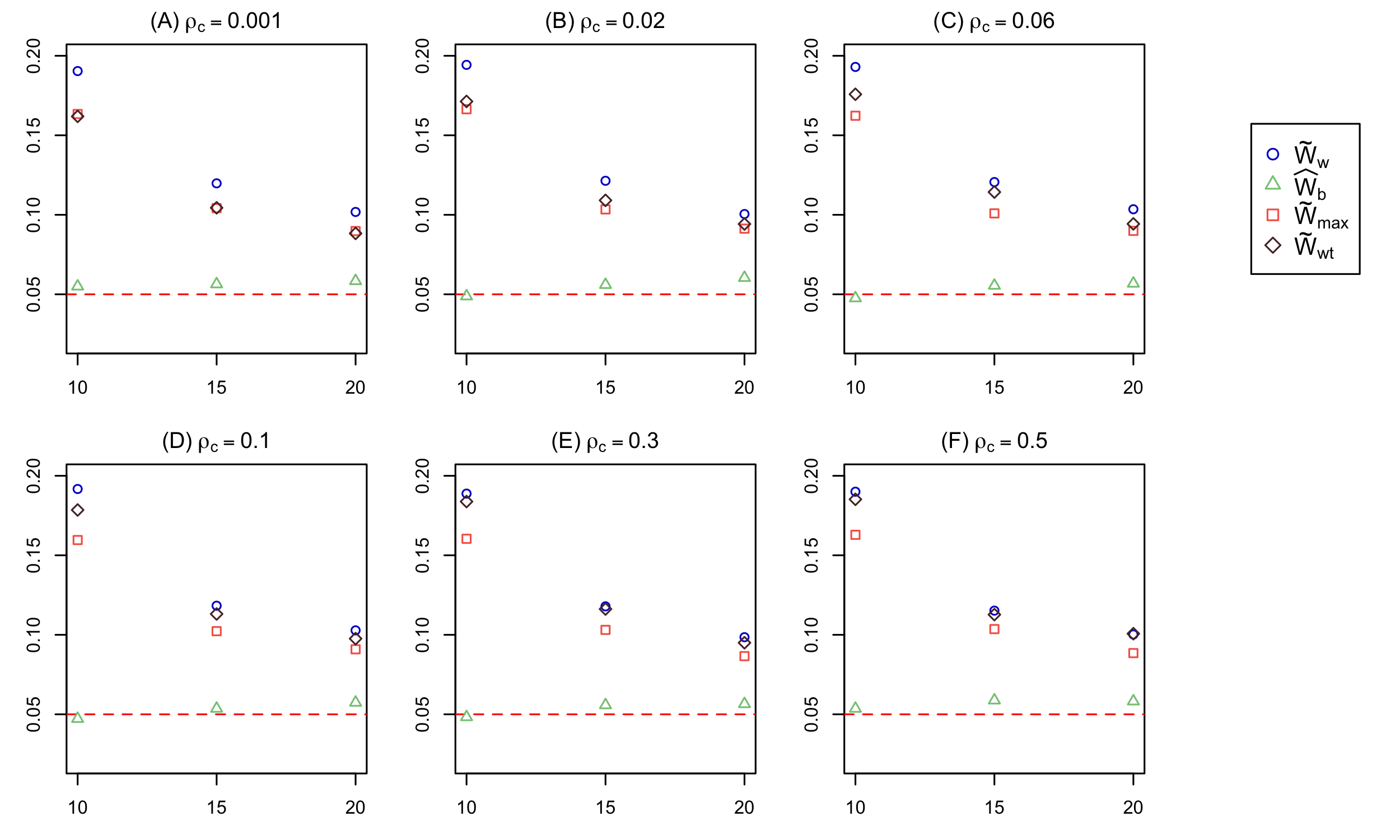}
    \caption{Type I error rates for the tests based on $\widetilde{W}_w$, $\widehat{W}_b$, $\widetilde{W}_{max}$ and $\widetilde{W}_{wt}$ without small-sample corrections for testing $H_0: D_w=0.5$, $H_0: D_b=0.5$, $H_0: D_b=D_w=0.5$, and $H_0: D_b=D_w=0.5$, respectively,  under the small-sample size scenarios (1. $n$=10, $m$=60; 2. $n$=15, $m$=40; 3. $n$=20, $m$=30) for mixture randomization based on 10,000 replicates. Panels (A)-(F) correspond to $\rho_c$=0.001, 0.02, 0.06, 0.1, 0.3, and 0.5, respectively.} 
    \label{fig:centered}
\end{figure}

\newpage
\noindent  \textbf{Small-sample size scenarios with corrections or adjustments.} \\
\textbf{Type I error rates.}
\begin{figure}[H]
    \centering 
  \includegraphics[width=0.9\textwidth]{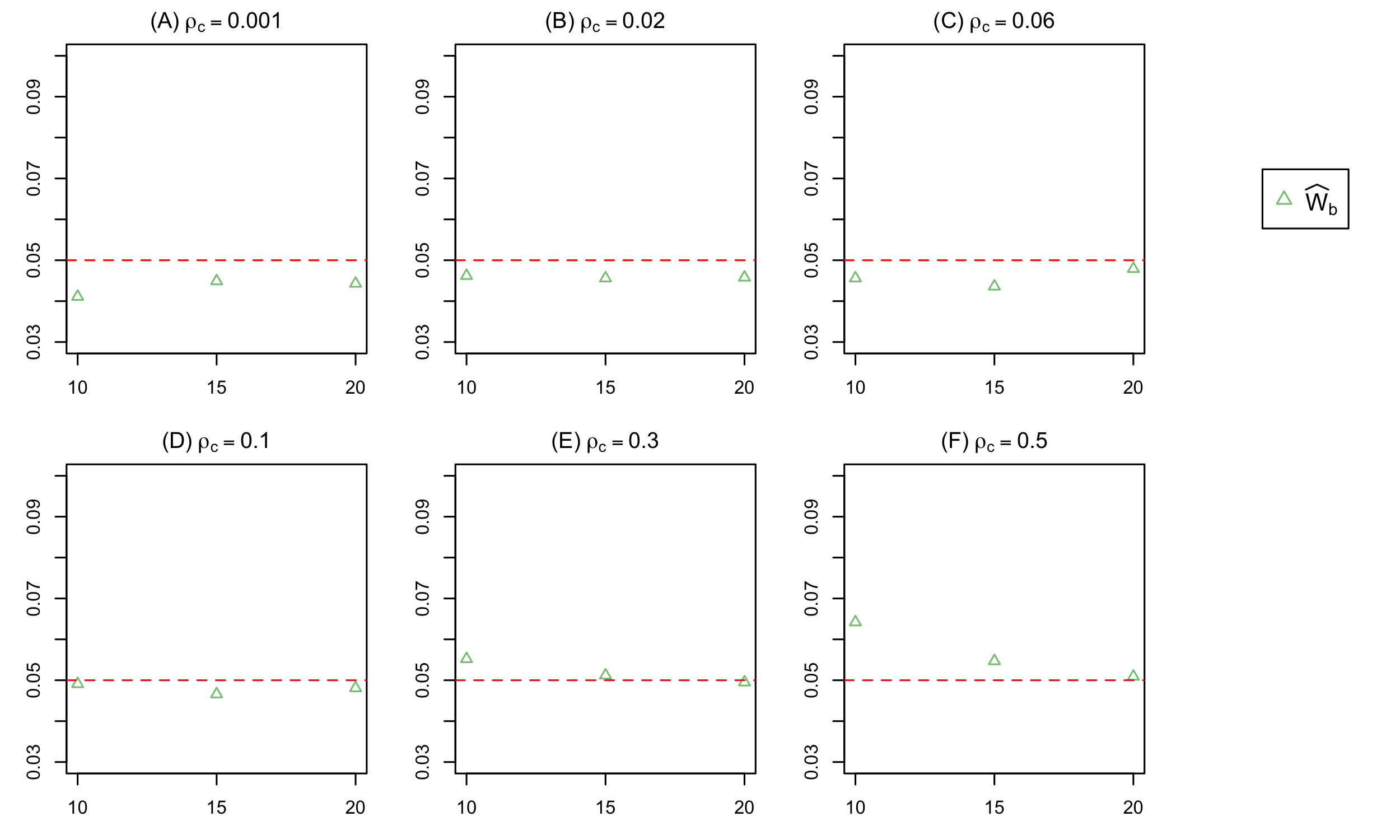}
    \caption{Type I error rates for the test based on $\widehat{W}_b$ with small-sample corrections for testing $H_0: D_b=0.5$  under the small-sample size scenarios (1. $n$=10, $m$=60; 2. $n$=15, $m$=40; 3. $n$=20, $m$=30) for one-group randomization  based on  10,000 replicates. Panels (A)-(F) correspond to $\rho_c$=0.001, 0.02, 0.06, 0.1, 0.3, and 0.5, respectively.
} 
    \label{fig:centered}
\end{figure}

\begin{figure}[H]
    \centering 
  \includegraphics[width=0.9\textwidth]{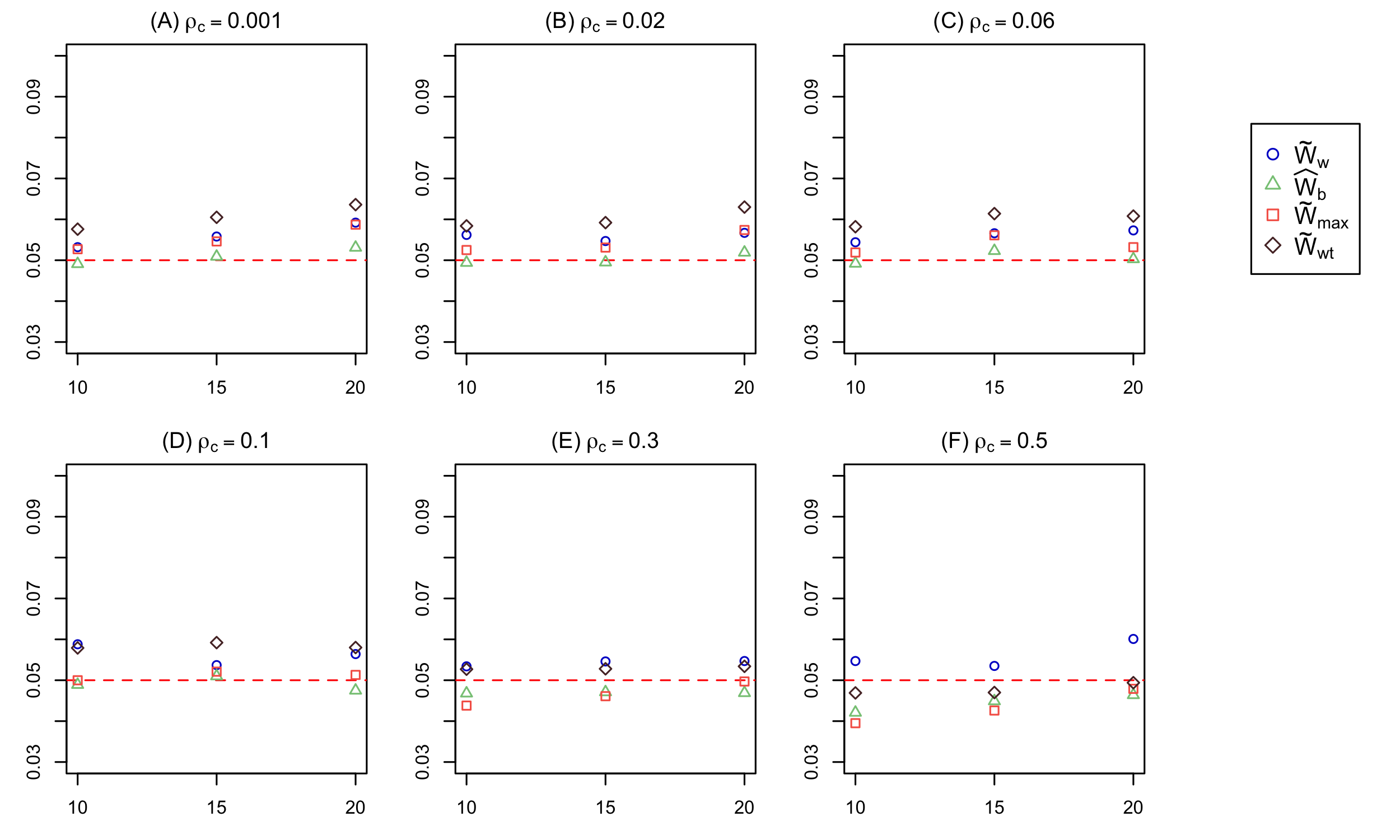}
    \caption{Type I error rates for the tests based on $\widetilde{W}_w$ with Type 2 method, $\widehat{W}_b$, $\widetilde{W}_{max}$ and $\widetilde{W}_{wt}$ with small-sample corrections or adjustments for testing $H_0: D_w=0.5$, $H_0: D_b=0.5$, $H_0: D_b=D_w=0.5$, and $H_0: D_b=D_w=0.5$, respectively,  under the small-sample size scenarios (1. $n$=10, $m$=60; 2. $n$=15, $m$=40; 3. $n$=20, $m$=30) for two-group randomization based on 10,000 replicates. Panels (A)-(F) correspond to $\rho_c$=0.001, 0.02, 0.06, 0.1, 0.3, and 0.5, respectively.} 
    \label{fig:centered}
\end{figure}

\begin{figure}[H]
    \centering 
  \includegraphics[width=0.9\textwidth]{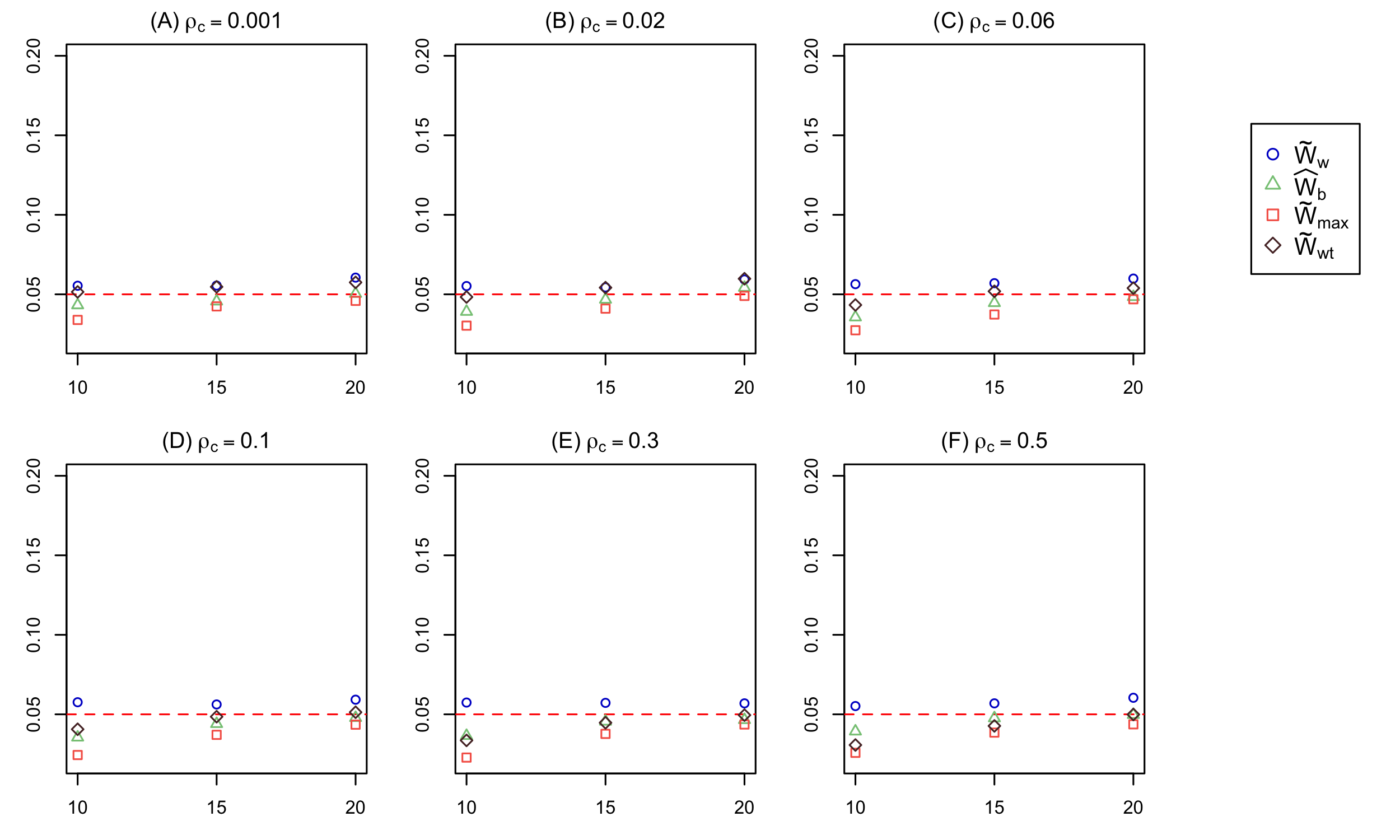}
    \caption{Type I error rates for the tests based on $\widetilde{W}_w$ with Type 2 method, $\widehat{W}_b$, $\widetilde{W}_{max}$ and $\widetilde{W}_{wt}$ with small-sample corrections or adjustments for testing $H_0: D_w=0.5$, $H_0: D_b=0.5$, $H_0: D_b=D_w=0.5$, and $H_0: D_b=D_w=0.5$, respectively,  under the small-sample size scenarios (1. $n$=10, $m$=60; 2. $n$=15, $m$=40; 3. $n$=20, $m$=30) for mixture randomization based on 10,000 replicates. Panels (A)-(F) correspond to $\rho_c$=0.001, 0.02, 0.06, 0.1, 0.3, and 0.5, respectively.} 
    \label{fig:centered}
\end{figure}

\newpage
\noindent \textbf{Small-sample size scenarios without corrections.} \\
\textbf{Powers.}
\begin{figure}[H]
    \centering 
  \includegraphics[width=0.9\textwidth]{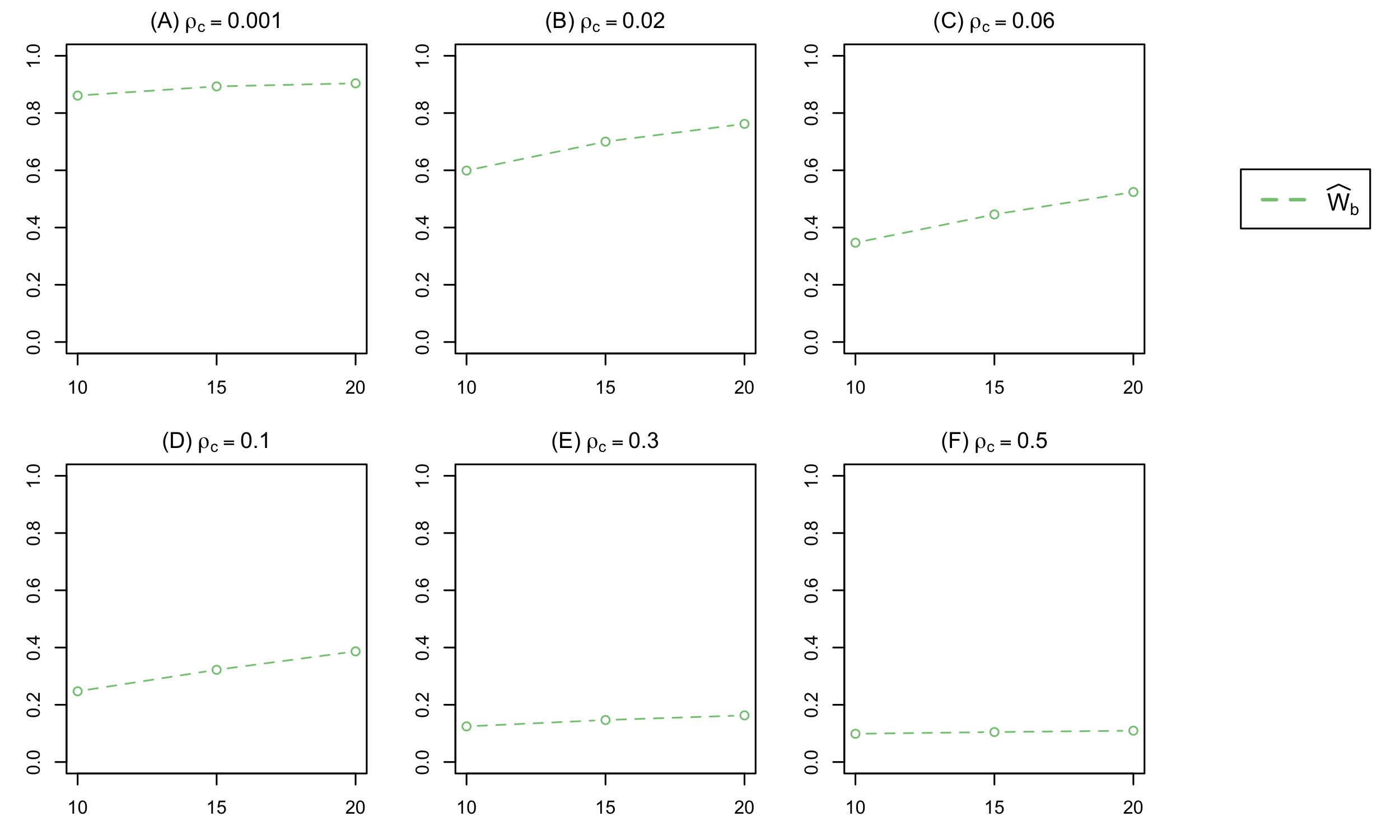}
    \caption{Powers for the test based on $\widehat{W}_b$ without small-sample corrections for testing $H_0: D_b=0.5$  under the small-sample size scenarios (1. $n$=10, $m$=60; 2. $n$=15, $m$=40; 3. $n$=20, $m$=30) for one-group randomization  based on  10,000 replicates. Panels (A)-(F) correspond to $\rho_c$=0.001, 0.02, 0.06, 0.1, 0.3, and 0.5, respectively.} 
    \label{fig:centered}
\end{figure}

\begin{figure}[H]
    \centering 
  \includegraphics[width=0.9\textwidth]{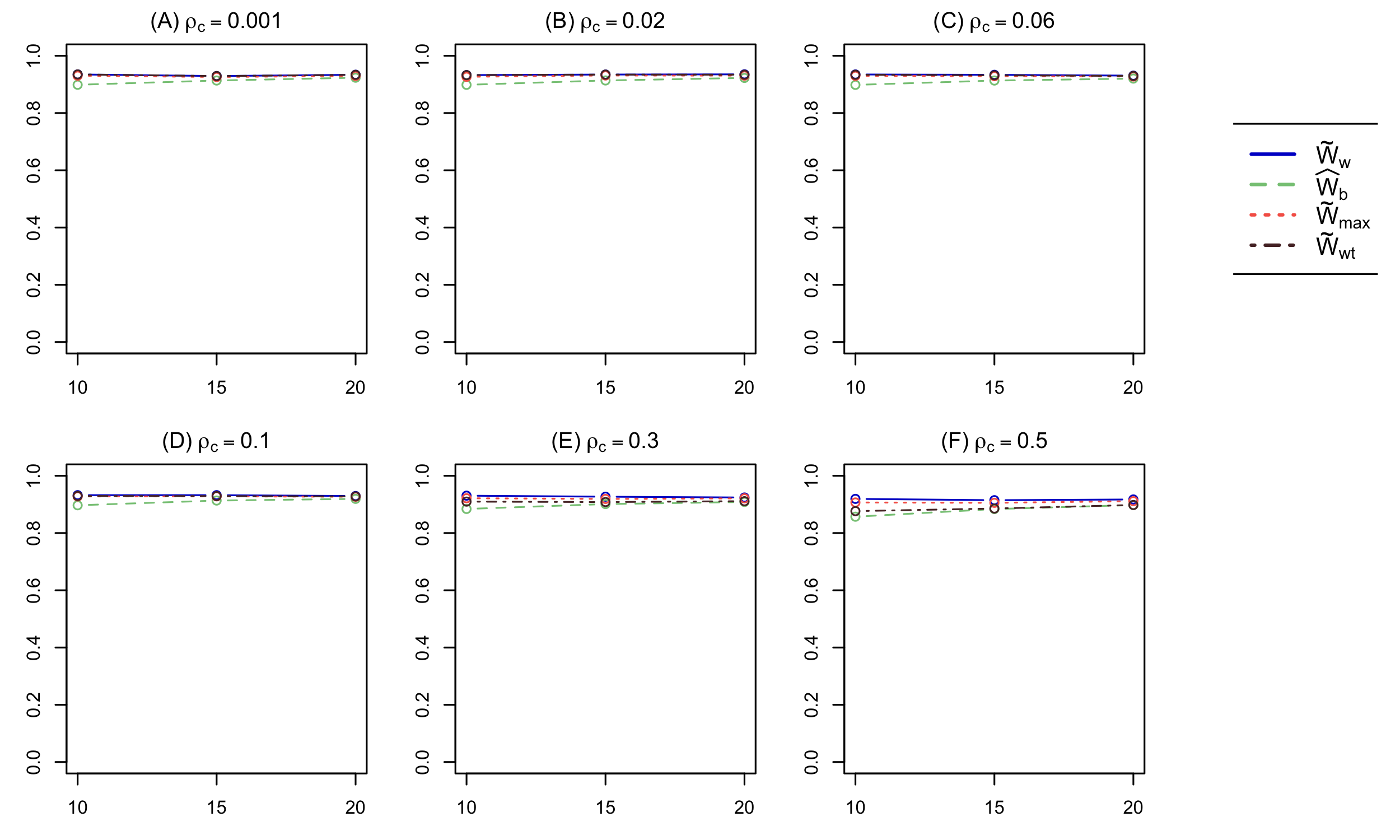}
    \caption{Powers for the tests based on $\widetilde{W}_w$, $\widehat{W}_b$, $\widetilde{W}_{max}$ and $\widetilde{W}_{wt}$ without small-sample corrections for testing $H_0: D_w=0.5$, $H_0: D_b=0.5$, $H_0: D_b=D_w=0.5$, and $H_0: D_b=D_w=0.5$, respectively,  under the small-sample size scenarios (1. $n$=10, $m$=60; 2. $n$=15, $m$=40; 3. $n$=20, $m$=30) for two-group randomization based on 10,000 replicates. Panels (A)-(F) correspond to $\rho_c$=0.001, 0.02, 0.06, 0.1, 0.3, and 0.5, respectively.} 
    \label{fig:centered}
\end{figure}

\begin{figure}[H]
    \centering 
  \includegraphics[width=0.9\textwidth]{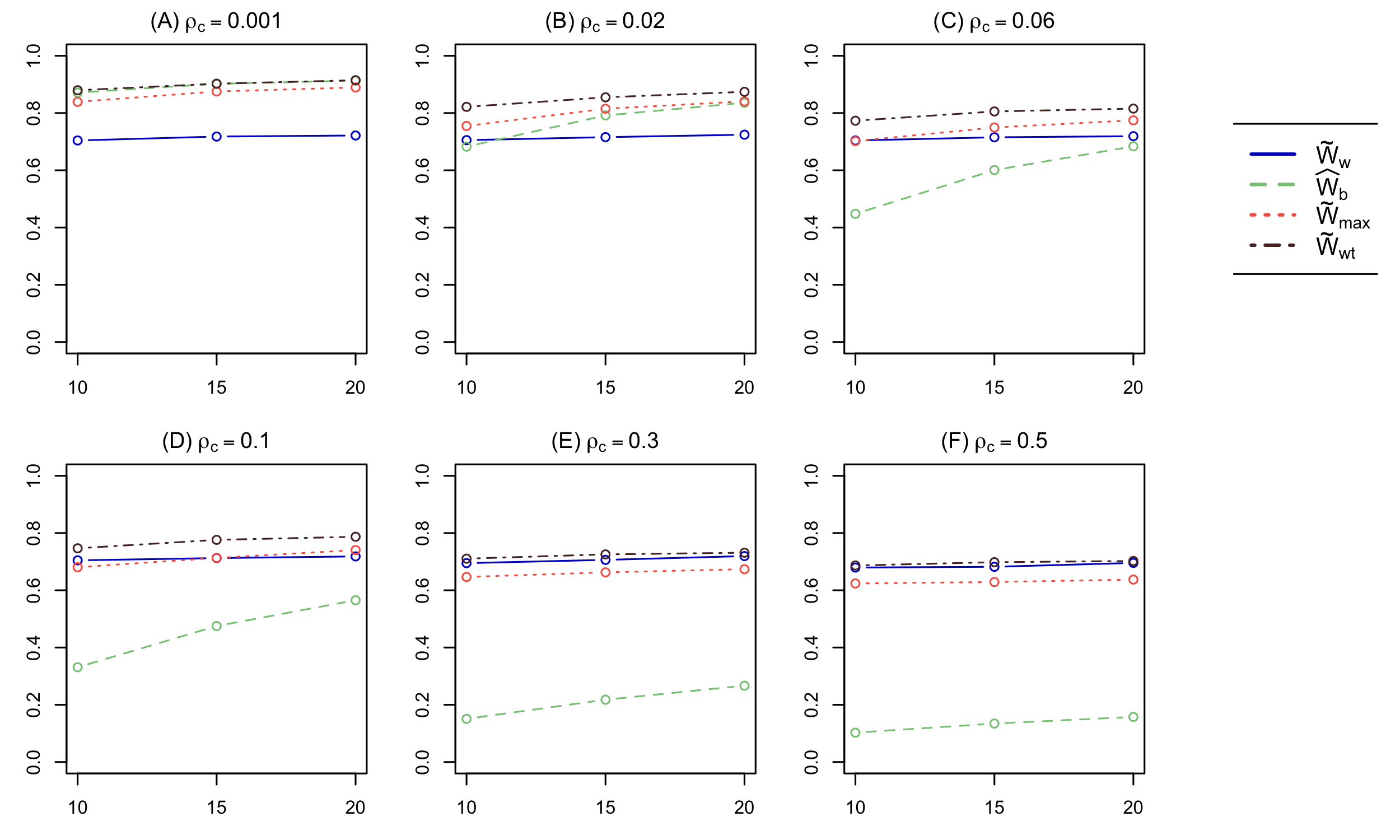}
    \caption{Powers for the tests based on $\widetilde{W}_w$, $\widehat{W}_b$, $\widetilde{W}_{max}$ and $\widetilde{W}_{wt}$ without small-sample corrections for testing $H_0: D_w=0.5$, $H_0: D_b=0.5$, $H_0: D_b=D_w=0.5$, and $H_0: D_b=D_w=0.5$, respectively,  under the small-sample size scenarios (1. $n$=10, $m$=60; 2. $n$=15, $m$=40; 3. $n$=20, $m$=30) for mixture randomization based on 10,000 replicates. Panels (A)-(F) correspond to $\rho_c$=0.001, 0.02, 0.06, 0.1, 0.3, and 0.5, respectively.} 
    \label{fig:centered}
\end{figure}

\newpage
\noindent \textbf{Small-sample size scenarios with corrections or adjustments.} \\
\textbf{Powers.}
\begin{figure}[H]
    \centering 
  \includegraphics[width=0.9\textwidth]{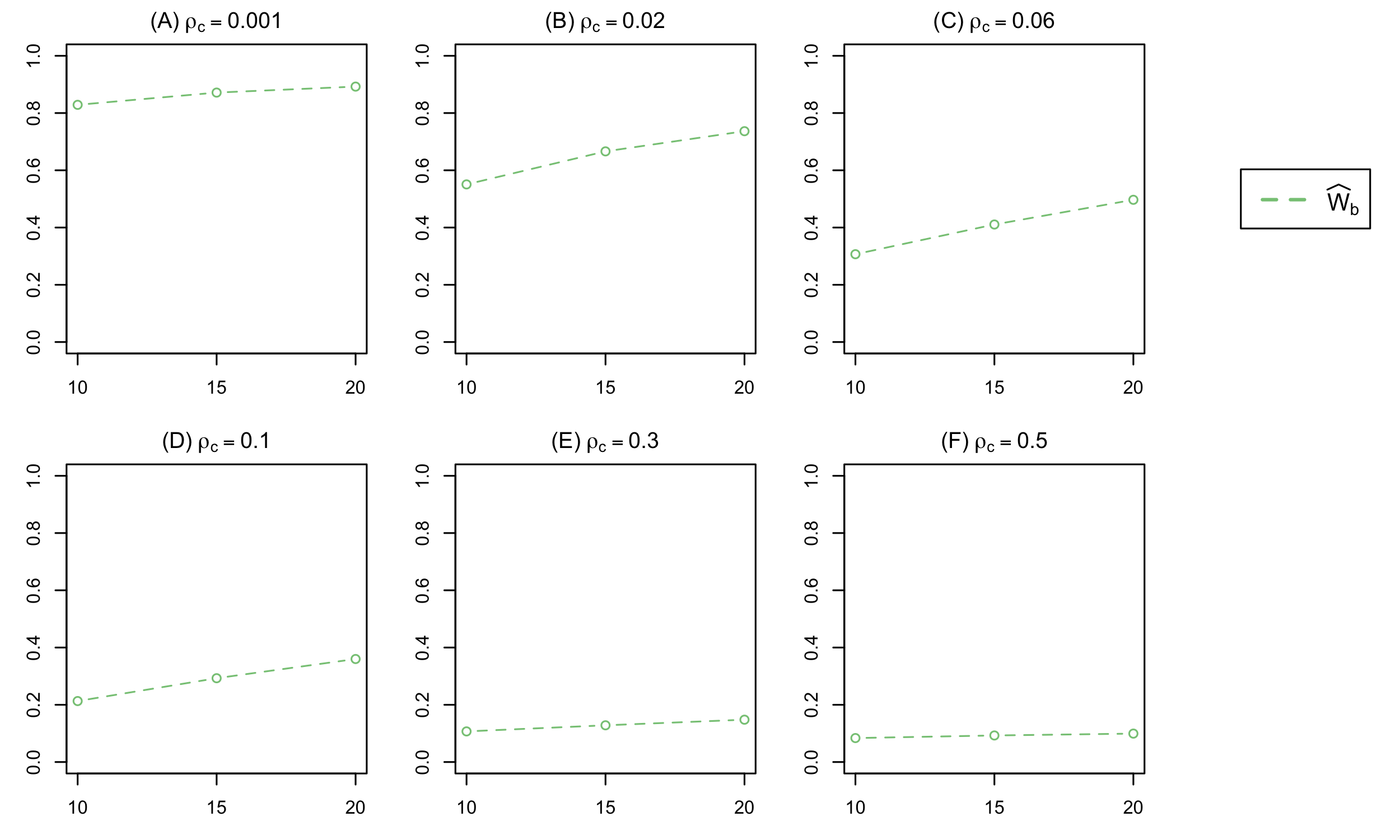}
    \caption{Powers for the test based on $\widehat{W}_b$ with small-sample corrections for testing $H_0: D_b=0.5$  under the small-sample size scenarios (1. $n$=10, $m$=60; 2. $n$=15, $m$=40; 3. $n$=20, $m$=30) for one-group randomization  based on  10,000 replicates. Panels (A)-(F) correspond to $\rho_c$=0.001, 0.02, 0.06, 0.1, 0.3, and 0.5, respectively. } 
    \label{fig:centered}
\end{figure}

\begin{figure}[H]
    \centering 
  \includegraphics[width=0.9\textwidth]{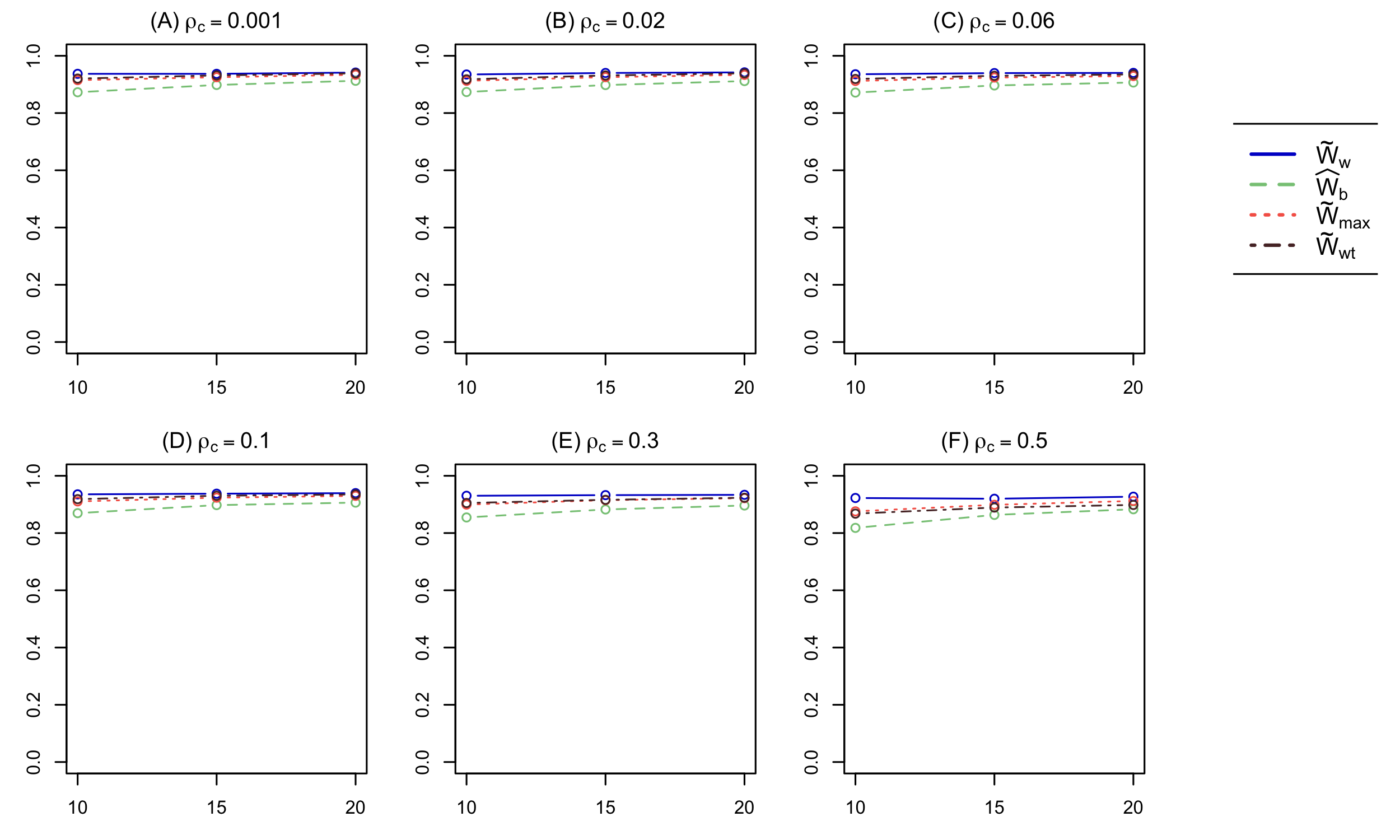}
    \caption{Powers for the tests based on $\widetilde{W}_w$ with Type 2 method, $\widehat{W}_b$, $\widetilde{W}_{max}$ and $\widetilde{W}_{wt}$ with small-sample corrections or adjustments for testing $H_0: D_w=0.5$, $H_0: D_b=0.5$, $H_0: D_b=D_w=0.5$, and $H_0: D_b=D_w=0.5$, respectively,  under the small-sample size scenarios (1. $n$=10, $m$=60; 2. $n$=15, $m$=40; 3. $n$=20, $m$=30) for two-group randomization based on 10,000 replicates. Panels (A)-(F) correspond to $\rho_c$=0.001, 0.02, 0.06, 0.1, 0.3, and 0.5, respectively.} 
    \label{fig:centered}
\end{figure}

\begin{figure}[H]
    \centering 
  \includegraphics[width=0.9\textwidth]{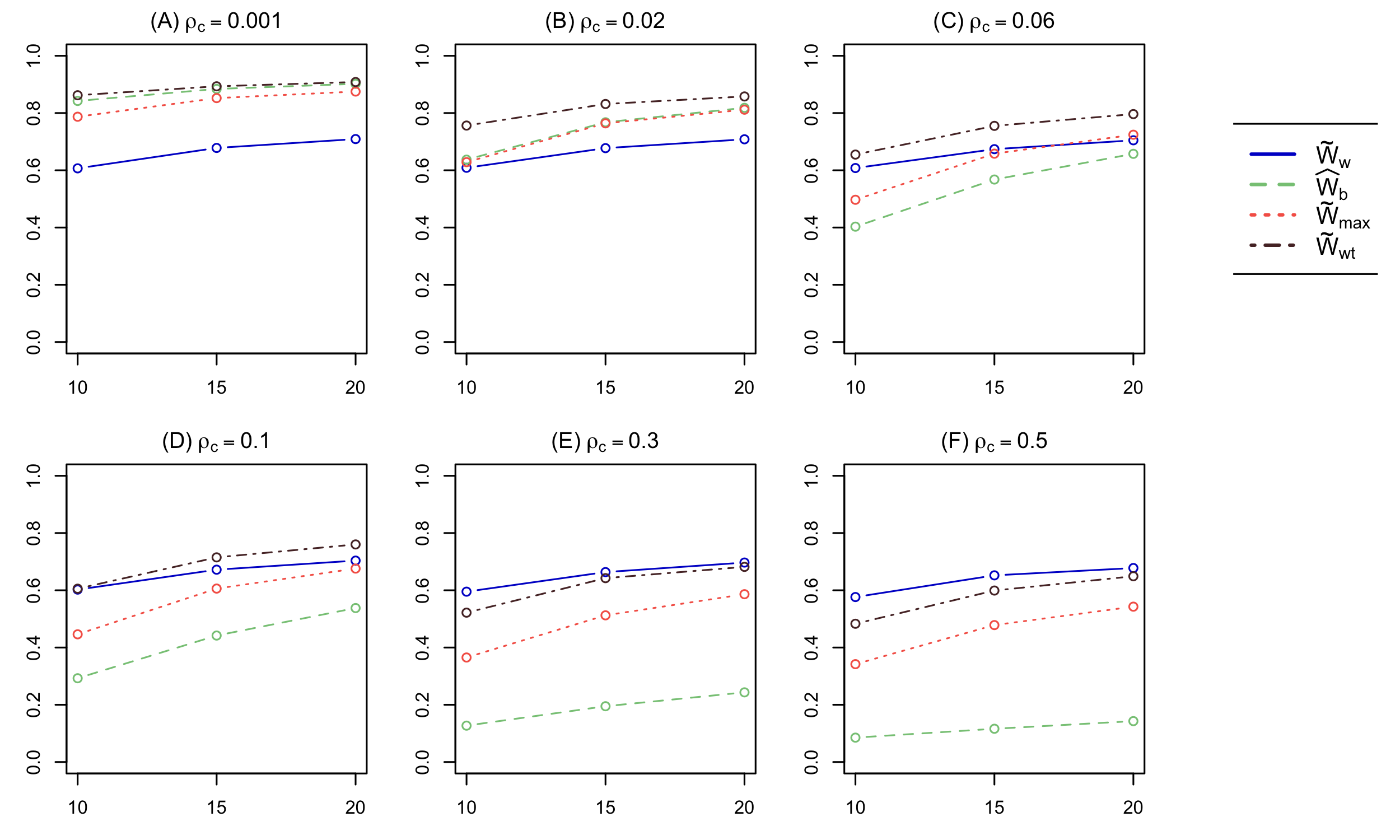}
    \caption{Powers for the tests based on $\widetilde{W}_w$ with Type 2 method, $\widehat{W}_b$, $\widetilde{W}_{max}$ and $\widetilde{W}_{wt}$ with small-sample corrections or adjustments for testing $H_0: D_w=0.5$, $H_0: D_b=0.5$, $H_0: D_b=D_w=0.5$, and $H_0: D_b=D_w=0.5$, respectively,  under the small-sample size scenarios (1. $n$=10, $m$=60; 2. $n$=15, $m$=40; 3. $n$=20, $m$=30) for mixture randomization based on 10,000 replicates. Panels (A)-(F) correspond to $\rho_c$=0.001, 0.02, 0.06, 0.1, 0.3, and 0.5, respectively. } 
    \label{fig:centered}
\end{figure}

\end{document}

%% file: main_CP_tables/Large_100_8_H1_b1.0_20260417.tex
\begin{table}[H]
\centering
\begin{tabular}{lllll}
  \hline
{Estimator or $\rho_c$}  & Bias & SE & SEE & CP \\ 
  \hline
  {One-group $\widehat{D}_b$} & & & & \\
  \hline
0.001 & -0.0003 & 0.0200 & 0.0199 & 0.950 \\ 
  0.02 & -0.0002 & 0.0211 & 0.0211 & 0.949 \\ 
  0.06 & -0.0001 & 0.0233 & 0.0233 & 0.950 \\ 
  0.1 & -0.0000 & 0.0253 & 0.0253 & 0.948 \\ 
  0.3 & 0.0001 & 0.0337 & 0.0337 & 0.948 \\ 
  0.5 & 0.0003 & 0.0404 & 0.0404 & 0.948 \\ 
    \hline
  {Two-group $\rho_c$=0.1}& & & & \\
  \hline
  $\widetilde{D}_{w}$  & 0.0001 & 0.0204 & 0.0201 & 0.946 \\ 
  $\widehat{D}_b$ & 0.0001 & 0.0190 & 0.0188 & 0.949 \\ 
   \hline
  {Mixture $\rho_c$=0.1} & & & & \\
   \hline
  $\widetilde{D}_{w}$  & 0.0000 & 0.0287 & 0.0283 & 0.942 \\ 
  $\widehat{D}_b$ & 0.0001 & 0.0223 & 0.0224 & 0.953 \\ 
   \hline
\end{tabular}
\caption{Simulation results for the large sample size scenario with 100 clusters and 8 subjects in each cluster based on 10,000 replicates. Bias is the average difference between parameter estimates and the true parameter value. SE is the sample standard deviation of the estimates. SEE is the average of the standard error estimates. CP is the coverage probability of the 95\% confidence interval.}
\end{table}

%% file: main_CP_tables/Small_new_10_60_CP_H1_b1.0_20260417.tex
\begin{table}[H]
\centering
\begin{tabular}{lllll}
  \hline
{Estimator or $\rho_c$} & Bias & SE & SEE & CP \\ 
  \hline
{One-group $\widehat{D}_b$ (w/ corrections)}& & & & \\
  \hline
0.001 & 0.0001 & 0.0233 & 0.0240 & 0.957 \\ 
  0.02 & 0.0005 & 0.0329 & 0.0336 & 0.954 \\ 
  0.06 & 0.0009 & 0.0469 & 0.0479 & 0.953 \\ 
  0.1 & 0.0011 & 0.0577 & 0.0589 & 0.952 \\ 
  0.3 & 0.0019 & 0.0947 & 0.0974 & 0.945 \\ 
  0.5 & 0.0026 & 0.1215 & 0.1257 & 0.936 \\ 
    \hline
  {Two-group $\rho_c$=0.1}& & & & \\
  \hline
  {$\widehat{D}_{w}$} & 0.0011 & 0.0241 & 0.0225 & 0.928 \\ 
{$\widetilde{D}_{w}$ (w/ Type 2 method)}& -0.0003 & 0.0229 & 0.0226 & 0.944 \\ 
{$\widetilde{D}_{w}$ (w/ Type 3 method)}& -0.0003 & 0.0229 & 0.0236 & 0.963 \\ 
 {$\widehat{D}_{b}$ (w/ corrections)} & -0.0003 & 0.0218 & 0.0212 & 0.952 \\ 
  \hline
  {Mixture $\rho_c$=0.1}& & & & \\
  \hline
{$\widehat{D}_{w}$} & 0.0009 & 0.0377 & 0.0356 & 0.931 \\ 
 {$\widetilde{D}_{w}$ (w/ Type 2 method)}& -0.0002 & 0.0362 & 0.0357 & 0.943 \\ 
 {$\widetilde{D}_{w}$ (w/ Type 3 method)}& -0.0002 & 0.0362 & 0.0373 & 0.946 \\ 
 {$\widehat{D}_{b}$ (w/ corrections)} & 0.0001 & 0.0476 & 0.0489 & 0.965 \\ 
   \hline
\end{tabular}
\caption{Simulation results for the small sample size scenario with 10 clusters and 60 subjects in each cluster based on 10,000 replicates. Bias is the average difference between parameter estimates and the true parameter value. SE is the sample standard deviation of the estimates. SEE is the average of the standard error estimates. CP is the coverage probability of the 95\% confidence interval.}
\end{table}

%% file: Suggested_Methods_v2.tex
\begin{table}[H]
\centering
\footnotesize  
\resizebox{\textwidth}{!}{
\begin{tabular}{|p{2.5cm}|p{3cm}|p{2.5cm}|p{5cm}|}
\hline
 \textbf{$n$ } &\textbf{$m$}  &\textbf{Within-cluster\newline estimator} &\textbf{Variance Estimator}\\ \hline

 Small 
& {Large\newline and small within-cluster correlation} 
&$\widehat{D}_{w}$ 
& Estimated by $1/{\sum_{i=1}^n m_i \widehat{\sigma}_i^{-2}}$
\\ \hline

Small 
& Large 
& $\widetilde{D}_{w}$ 
& [Type 2 method] Estimated by $\sum_{i=1}^n \widetilde{w}_i^2\sigma_i^2$ 
\\ \hline

Small 
& Moderate/Large
& $\widetilde{D}_{w}$ 
& [Type 3 method] Use influence functions with small sample size corrections
\\ \hline

 Large
& Small/Large 
& $\widetilde{D}_{w}$ 
& [Type 1 method]\newline Use influence functions, can estimated using the empirical version
$\sum_{i=1}^n \widetilde{w}_i^2 (\widehat{D}_{wi} - \widetilde{D}_{w})^2$ 
\\ \hline

 &  &\textbf{Between-cluster\newline estimator}  &  \\ \hline

Small
& Moderate/Large
& $\widehat{D}_b$ 
& Use influence functions with small sample size corrections
  \\ \hline

Large ($n\geq 15$)
& Small/Large 
& $\widehat{D}_b$ 
& Use influence function
\\ \hline

\end{tabular}
}
\caption{A guidance for the choice of the proposed methods in different practical settings.}
\label{Suggested_Methods_v2}
\end{table}

%% file: supp_CP_tables/Large_100_8_H1_b1.0_corr_20260417.tex
\begin{table}[H]
\centering
\begin{tabular}{llllll}
  \hline
Estimator & $\rho_c$ & Bias & SE & SEE & CP  \\ 
  \hline
{Two-group}& & & & \\
  \hline
  $\widetilde{D}_w$ & 0.001 & 0.0001 & 0.0205 & 0.0202 & 0.944  \\ 
  $\widehat{D}_b$ & 0.001  & 0.0001 & 0.0199 & 0.0197 & 0.950 \\ 
      \hline
  $\widetilde{D}_w$ & 0.02 & 0.0001 & 0.0205 & 0.0202 & 0.945  \\ 
  $\widehat{D}_b$ & 0.02 & 0.0001 & 0.0197 & 0.0196 & 0.949  \\ 
      \hline
  $\widetilde{D}_w$ & 0.06 & 0.0001 & 0.0204 & 0.0201 & 0.947  \\ 
  $\widehat{D}_b$ & 0.06 & 0.0001 & 0.0194 & 0.0192 & 0.949  \\ 
      \hline
  $\widetilde{D}_w$ & 0.3 & 0.0001 & 0.0201 & 0.0198 & 0.943  \\ 
  $\widehat{D}_b$ & 0.3 & 0.0001 & 0.0171 & 0.0169 & 0.947  \\ 
      \hline
  $\widetilde{D}_w$ & 0.5 & 0.0001 & 0.0195 & 0.0192 & 0.944  \\ 
  $\widehat{D}_b$ & 0.5 & 0.0001 & 0.0148 & 0.0146 & 0.950  \\ 
     \hline
   {Mixture}& & & & \\
   \hline
  $\widetilde{D}_w$ & 0.001 & -0.0001 & 0.0288 & 0.0284 & 0.942  \\ 
  $\widehat{D}_b$ & 0.001 & 0.0001 & 0.0198 & 0.0199 & 0.952  \\ 
      \hline
  $\widetilde{D}_w$ & 0.02 & 0.0000 & 0.0288 & 0.0284 & 0.943  \\ 
  $\widehat{D}_b$ & 0.02 & 0.0001 & 0.0203 & 0.0204 & 0.950  \\ 
      \hline
  $\widetilde{D}_w$ & 0.06 & -0.0001 & 0.0288 & 0.0283 & 0.940  \\ 
  $\widehat{D}_b$ & 0.06 & 0.0000 & 0.0214 & 0.0214 & 0.951  \\ 
      \hline
  $\widetilde{D}_w$ & 0.3 & -0.0001 & 0.0282 & 0.0278 & 0.942  \\ 
  $\widehat{D}_b$ & 0.3 & -0.0001 & 0.0267 & 0.0267 & 0.951  \\ 
      \hline
  $\widetilde{D}_w$ & 0.5 & -0.0001 & 0.0274 & 0.0270 & 0.941  \\ 
  $\widehat{D}_b$ & 0.5 & -0.0001 & 0.0307 & 0.0305 & 0.949  \\ 
   \hline
\end{tabular}
\caption{Simulation results for the large sample size scenario with 100 clusters and 8 subjects in each cluster based on 10,000 replicates. Bias is the average difference between parameter estimates and the true parameter value. SE is the sample standard deviation of the estimates. SEE is the average of the standard error estimates. CP is the coverage probability of the 95\% confidence interval. $\rho_c$=0.001, 0.02, 0.06, 0.3, 0.5.}
\end{table}

%% file: supp_CP_tables/Large_50_16_H1_b1.0_20260417.tex
\begin{table}[H]
\centering
\begin{tabular}{lllll}
  \hline
{Estimator or $\rho_c$} & Bias & SE & SEE & CP \\ 
  \hline
  {One-group $\widehat{D}_b$}& & & & \\
  \hline
0.001 & -0.0002 & 0.0199 & 0.0200 & 0.952 \\ 
  0.020 & -0.0003 & 0.0223 & 0.0223 & 0.953 \\ 
  0.060 & -0.0005 & 0.0267 & 0.0265 & 0.949 \\ 
  0.100 & -0.0006 & 0.0305 & 0.0302 & 0.949 \\ 
  0.300 & -0.0009 & 0.0449 & 0.0442 & 0.944 \\ 
  0.500 & -0.0012 & 0.0559 & 0.0549 & 0.944 \\ 
\hline
  {Two-group $\rho_c$=0.1}& & & & \\
  \hline
  $\widetilde{D}_w$ & -0.0001 & 0.0201 & 0.0197 & 0.940 \\ 
  $\widehat{D}_b$ & -0.0002 & 0.0190 & 0.0188 & 0.949 \\ 
\hline
  {Mixture $\rho_c$=0.1}& & & & \\
  \hline
  $\widetilde{D}_w$ & 0.0002 & 0.0288 & 0.0280 & 0.935 \\ 
  $\widehat{D}_b$ & 0.0004 & 0.0257 & 0.0254 & 0.951 \\ 
   \hline
\end{tabular}
\caption{Simulation results for the large sample size scenario with 50 clusters and 16 subjects in each cluster based on 10,000 replicates. Bias is the average difference between parameter estimates and the true parameter value. SE is the sample standard deviation of the estimates. SEE is the average of the standard error estimates. CP is the coverage probability of the 95\% confidence interval.}
\end{table}

%% file: supp_CP_tables/Large_200_4_H1_b1.0_20260417.tex
\begin{table}[H]
\centering
\begin{tabular}{lllll}
  \hline
{Estimator or $\rho_c$} & Bias & SE & SEE & CP \\ 
  \hline
    {One-group $\widehat{D}_b$}& & & & \\
  \hline
0.001 & -0.0001 & 0.0197 & 0.0199 & 0.955 \\ 
  0.020 & -0.0000 & 0.0202 & 0.0204 & 0.952 \\ 
  0.060 & 0.0000 & 0.0212 & 0.0214 & 0.954 \\ 
  0.100 & 0.0001 & 0.0222 & 0.0223 & 0.953 \\ 
  0.300 & 0.0002 & 0.0266 & 0.0267 & 0.950 \\ 
  0.500 & 0.0002 & 0.0304 & 0.0305 & 0.949 \\ 
\hline
    {Two-group $\rho_c$=0.1}& & & & \\
\hline
  $\widetilde{D}_w$ & -0.0001 & 0.0209 & 0.0207 & 0.944 \\ 
  $\widehat{D}_b$ & -0.0001 & 0.0190 & 0.0189 & 0.947 \\ 
\hline
    {Mixture $\rho_c$=0.1}& & & & \\
\hline
  $\widetilde{D}_w$ & 0.0004 & 0.0297 & 0.0292 & 0.945 \\ 
  $\widehat{D}_b$& 0.0005 & 0.0206 & 0.0207 & 0.951 \\ 
   \hline
\end{tabular}
\caption{Simulation results for the large sample size scenario with 200 clusters and 4 subjects in each cluster based on 10,000 replicates. Bias is the average difference between parameter estimates and the true parameter value. SE is the sample standard deviation of the estimates. SEE is the average of the standard error estimates. CP is the coverage probability of the 95\% confidence interval.}
\end{table}

%% file: supp_CP_tables/Small_old_10_60_CP_H1_b1.0_20260417.tex
\begin{table}[H]
\centering
\begin{tabular}{lllll}
  \hline
{Estimator or $\rho_c$} & Bias & SE & SEE & CP \\ 
  \hline
{One-group $\widehat{D}_b$}& & & & \\
  \hline
0.001 & 0.0001 & 0.0233 & 0.0232 & 0.947 \\ 
  0.02 & 0.0005 & 0.0329 & 0.0315 & 0.943 \\ 
  0.06 & 0.0009 & 0.0469 & 0.0441 & 0.942 \\ 
  0.1 & 0.0011 & 0.0577 & 0.0539 & 0.942 \\ 
  0.3 & 0.0019 & 0.0947 & 0.0885 & 0.933 \\ 
  0.5 & 0.0026 & 0.1215 & 0.1141 & 0.926 \\ 
    \hline
  {Two-group $\rho_c$=0.1}& & & & \\
  \hline
{$\widehat{D}_w$} & 0.0011 & 0.0241 & 0.0225 & 0.928 \\ 
  {$\widetilde{D}_w$ (w/ Type 1 method)}& -0.0003 & 0.0229 & 0.0211 & 0.907 \\ 
  {$\widehat{D}_b$} & -0.0003 & 0.0218 & 0.0207 & 0.934 \\ 
    \hline
  {Mixture $\rho_c$=0.1}& & & & \\
  \hline
{$\widehat{D}_w$}& 0.0009 & 0.0377 & 0.0356 & 0.931 \\ 
  {$\widetilde{D}_w$ (w/ Type 1 method)} & -0.0002 & 0.0362 & 0.0289 & 0.808 \\ 
  {$\widehat{D}_b$}& 0.0001 & 0.0476 & 0.0448 & 0.953 \\ 
   \hline
\end{tabular}
\caption{Simulation results for the small sample size scenario for methods without small-sample corrections, with 10 clusters and 60 subjects in each cluster based on 10,000 replicates. Bias is the average difference between parameter estimates and the true parameter value. SE is the sample standard deviation of the estimates. SEE is the average of the standard error estimates. CP is the coverage probability of the 95\% confidence interval.}
\end{table}

%% file: supp_CP_tables/Small_old_15_40_CP_H1_b1.0_20260417.tex
\begin{table}[H]
\centering
\begin{tabular}{lllll}
  \hline
{Estimator or $\rho_c$} & Bias & SE & SEE & CP \\ 
  \hline
{One-group $\widehat{D}_b$}& & & & \\
  \hline
0.001 & -0.0000 & 0.0234 & 0.0232 & 0.950 \\ 
  0.02& -0.0001 & 0.0298 & 0.0291 & 0.950 \\ 
  0.06 & -0.0004 & 0.0403 & 0.0389 & 0.949 \\ 
  0.1 & -0.0004 & 0.0487 & 0.0467 & 0.946 \\ 
  0.3 & -0.0006 & 0.0781 & 0.0748 & 0.941 \\ 
  0.5 & -0.0009 & 0.0996 & 0.0957 & 0.936 \\ 
      \hline
  {Two-group $\rho_c$=0.1}& & & & \\
  \hline
  {$\widehat{D}_w$}  & 0.0020 & 0.0250 & 0.0224 & 0.916 \\ 
  {$\widetilde{D}_w$ (w/ Type 1 method)} & -0.0002 & 0.0228 & 0.0218 & 0.921 \\ 
  {$\widehat{D}_b$}& -0.0002 & 0.0218 & 0.0211 & 0.943 \\ 
      \hline
  {Mixture $\rho_c$=0.1}& & & & \\
  \hline
  {$\widehat{D}_w$}  & 0.0018 & 0.0361 & 0.0328 & 0.917 \\ 
  {$\widetilde{D}_w$ (w/ Type 1 method)}& -0.0002 & 0.0333 & 0.0298 & 0.885 \\ 
  {$\widehat{D}_b$}& 0.0001 & 0.0399 & 0.0376 & 0.945 \\ 
   \hline
\end{tabular}
\caption{Simulation results for the small sample size scenario for methods without small-sample corrections, with 15 clusters and 40 subjects in each cluster based on 10,000 replicates. Bias is the average difference between parameter estimates and the true parameter value. SE is the sample standard deviation of the estimates. SEE is the average of the standard error estimates. CP is the coverage probability of the 95\% confidence interval.}
\end{table}

%% file: supp_CP_tables/Small_old_20_30_CP_H1_b1.0_20260417.tex
\begin{table}[H]
\centering
\begin{tabular}{lllll}
  \hline
{Estimator or $\rho_c$} & Bias & SE & SEE & CP \\ 
  \hline
{One-group $\widehat{D}_b$}& & & & \\
  \hline
0.001 & -0.0004 & 0.0230 & 0.0231 & 0.952 \\ 
  0.02& -0.0006 & 0.0280 & 0.0278 & 0.949 \\ 
  0.06 & -0.0008 & 0.0363 & 0.0358 & 0.947 \\ 
  0.1 & -0.0010 & 0.0431 & 0.0423 & 0.948 \\ 
  0.3 & -0.0013 & 0.0676 & 0.0662 & 0.943 \\ 
  0.5 & -0.0015 & 0.0858 & 0.0841 & 0.941 \\ 
    \hline
  {Two-group $\rho_c$=0.1}& & & & \\
  \hline
  {$\widehat{D}_w$}   & 0.0035 & 0.0265 & 0.0222 & 0.896 \\ 
  {$\widetilde{D}_w$ (w/ Type 1 method)} & 0.0002 & 0.0230 & 0.0221 & 0.928 \\ 
  {$\widehat{D}_b$}& 0.0001 & 0.0219 & 0.0213 & 0.947 \\ 
    \hline
  {Mixture $\rho_c$=0.1}& & & & \\
  \hline
  {$\widehat{D}_w$}   & 0.0028 & 0.0369 & 0.0314 & 0.902 \\ 
  {$\widetilde{D}_w$ (w/ Type 1 method)} & -0.0002 & 0.0327 & 0.0298 & 0.899 \\ 
  {$\widehat{D}_b$} & -0.0004 & 0.0354 & 0.0339 & 0.946 \\ 
   \hline
\end{tabular}
  \caption{Simulation results for the small sample size scenario for methods without small-sample corrections, with 20 clusters and 30 subjects in each cluster based on 10,000 replicates. Bias is the average difference between parameter estimates and the true parameter value. SE is the sample standard deviation of the estimates. SEE is the average of the standard error estimates. CP is the coverage probability of the 95\% confidence interval.}
\end{table}

%% file: supp_CP_tables/Small_new_15_40_CP_H1_b1.0_20260417.tex
\begin{table}[H]
\centering
\begin{tabular}{lllll}
  \hline
{Estimator or $\rho_c$} & Bias & SE & SEE & CP \\ 
  \hline
{One-group $\widehat{D}_b $ (w/ corrections)}& & & & \\
  \hline
0.001 & -0.0000 & 0.0234 & 0.0237 & 0.955 \\ 
  0.020 & -0.0001 & 0.0298 & 0.0303 & 0.956 \\ 
  0.060 & -0.0004 & 0.0403 & 0.0410 & 0.956 \\ 
  0.100 & -0.0004 & 0.0487 & 0.0495 & 0.955 \\ 
  0.300 & -0.0006 & 0.0781 & 0.0797 & 0.948 \\ 
  0.500 & -0.0009 & 0.0996 & 0.1021 & 0.945 \\ 
      \hline
  {Two-group $\rho_c$=0.1}& & & & \\
  \hline
 {$\widehat{D}_{w}$}  & 0.0020 & 0.0250 & 0.0224 & 0.916 \\ 
{$\widetilde{D}_{w}$ (w/ Type 2 method)} & -0.0002 & 0.0228 & 0.0224 & 0.944 \\ 
{$\widetilde{D}_{w}$ (w/ Type 3 method)}& -0.0002 & 0.0228 & 0.0234 & 0.957 \\ 
 {$\widehat{D}_{b}$ (w/ corrections)} & -0.0002 & 0.0218 & 0.0215 & 0.949 \\ 
    \hline
  {Mixture $\rho_c$=0.1}& & & & \\
  \hline
 {$\widehat{D}_{w}$}  & 0.0018 & 0.0361 & 0.0328 & 0.917 \\ 
{$\widetilde{D}_{w}$ (w/ Type 2 method)}  & -0.0002 & 0.0333 & 0.0329 & 0.943  \\ 
{$\widetilde{D}_{w}$ (w/ Type 3 method)} & -0.0002 & 0.0333 & 0.0348 & 0.958 \\ 
 {$\widehat{D}_{b}$ (w/ corrections)} & 0.0001 & 0.0399 & 0.0396 & 0.957 \\ 
   \hline
\end{tabular}
\caption{Simulation results for the small sample size scenario for methods with small-sample corrections or adjustments, with 15 clusters and 40 subjects in each cluster based on 10,000 replicates. Bias is the average difference between parameter estimates and the true parameter value. SE is the sample standard deviation of the estimates. SEE is the average of the standard error estimates. CP is the coverage probability of the 95\% confidence interval.}
\end{table}

%% file: supp_CP_tables/Small_new_20_30_CP_H1_b1.0_20260417.tex
\begin{table}[H]
\centering
\begin{tabular}{lllll}
  \hline
{Estimator or $\rho_c$} & Bias & SE & SEE & CP \\ 
  \hline
{One-group $\widehat{D}_b$ (w/ corrections)}& & & & \\
  \hline
0.001 & -0.0004 & 0.0230 & 0.0235 & 0.957 \\ 
  0.02 & -0.0006 & 0.0280 & 0.0286 & 0.955 \\ 
  0.06 & -0.0008 & 0.0363 & 0.0372 & 0.953 \\ 
  0.1 & -0.0010 & 0.0431 & 0.0441 & 0.953 \\ 
  0.3 & -0.0013 & 0.0676 & 0.0695 & 0.949 \\ 
  0.5 & -0.0015 & 0.0858 & 0.0884 & 0.949 \\ 
    \hline
  {Two-group $\rho_c$=0.1}& & & & \\
  \hline
  {$\widehat{D}_{w}$}  & 0.0035 & 0.0265 & 0.0222 & 0.896 \\ 
{$\widetilde{D}_{w}$ (w/ Type 2 method)} & 0.0002 & 0.0230 & 0.0223 & 0.943 \\ 
{$\widetilde{D}_{w}$ (w/ Type 3 method)} & 0.0002 & 0.0230 & 0.0233 & 0.956 \\ 
 {$\widehat{D}_{b}$ (w/ corrections)}  & 0.0001 & 0.0219 & 0.0216 & 0.951 \\ 
    \hline
  {Mixture $\rho_c$=0.1}& & & & \\
  \hline
  {$\widehat{D}_{w}$}  & 0.0028 & 0.0369 & 0.0314 & 0.902 \\ 
{$\widetilde{D}_{w}$ (w/ Type 2 method)} & -0.0002 & 0.0327 & 0.0316 & 0.939 \\ 
{$\widetilde{D}_{w}$ (w/ Type 3 method)} & -0.0002 & 0.0327 & 0.0334 & 0.954 \\ 
 {$\widehat{D}_{b}$ (w/ corrections)} & -0.0004 & 0.0354 & 0.0351 & 0.952 \\ 
   \hline
\end{tabular}
\caption{Simulation results for the small sample size scenario for methods with small-sample corrections or adjustments, with 20 clusters and 30 subjects in each cluster based on 10,000 replicates. Bias is the average difference between parameter estimates and the true parameter value. SE is the sample standard deviation of the estimates. SEE is the average of the standard error estimates. CP is the coverage probability of the 95\% confidence interval.}
\end{table}

%% file: Reference.bib
@article{tukey1953problem,
  title={The problem of multiple comparisons},
  author={Tukey, John Wilder},
  journal={Multiple Comparisons},
  year={1953},
  publisher={Chapman and Hall}
}

@book{jennison2000group,
author={Jennison, C and Turnbull, B. W.},
title={Group Sequential Methods with Applications to Clinical Trials},
address={Boca Raton, FL},
publisher={Chapman \& Hall/CRC}, 
year={2000},
}

@article{neuhaus1998between,
  title={Between-and within-cluster covariate effects in the analysis of clustered data},
  author={Neuhaus, John M and Kalbfleisch, Jack D},
  journal={Biometrics},
  pages={638--645},
  year={1998},
  publisher={JSTOR}
}

@article{roberts2005design,
  title={Design and analysis of clinical trials with clustering effects due to treatment},
  author={Roberts, Chris and Roberts, Stephen A},
  journal={Clinical Trials},
  volume={2},
  number={2},
  pages={152--162},
  year={2005},
  publisher={Sage Publications Sage CA: Thousand Oaks, CA}
}

@article{lyu2020comparison,
  title={Comparison of perioperative parameters in femtosecond laser-assisted cataract surgery using 3 nuclear fragmentation patterns},
  author={Lyu, Danni and Shen, Zeren and Zhang, Lifang and Qin, Zhenwei and Ni, Shuang and Wang, Wei and Zhu, Yanan and Yao, Ke},
  journal={American Journal of Ophthalmology},
  volume={213},
  pages={283--292},
  year={2020},
  publisher={Elsevier}
}

@article{price2021randomized,
  title={Randomized, double-masked trial of netarsudil 0.02\% ophthalmic solution for prevention of corticosteroid-induced ocular hypertension},
  author={Price, Marianne O and Feng, Matthew T and Price Jr, Francis W},
  journal={American Journal of Ophthalmology},
  volume={222},
  pages={382--387},
  year={2021},
  publisher={Elsevier}
}

@article{pall2019management,
  title={Management of ocular allergy itch with an antihistamine-releasing contact lens},
  author={Pall, Brian and Gomes, Paul and Yi, Frank and Torkildsen, Gail},
  journal={Cornea},
  volume={38},
  number={6},
  pages={713--717},
  year={2019},
  publisher={LWW}
}

@book{hayesClusterRandomisedTrials2017,
  title = {Cluster {{Randomised Trials}}},
  author = {Hayes, Richard J. and Moulton, Lawrence H.},
  year = {2017},
  month = jul,
  edition = {2},
  publisher = {{Chapman and Hall/CRC}},
  address = {New York},
  doi = {10.4324/9781315370286},
  isbn = {978-1-315-37028-6}
}

@article{ahnEvaluationWeightedChiSquare2003,
  title = {An {{Evaluation}} of {{Weighted Chi-Square Statistics}} for {{Clustered Binary Data}}},
  author = {Ahn, Chul and Jung, Sin-Ho and Kang, Seung-Ho},
  year = {2003},
  month = jan,
  journal = {Drug Information Journal},
  volume = {37},
  number = {1},
  pages = {91--99},
  publisher = {SAGE Publications},
  issn = {0092-8615},
  doi = {10.1177/009286150303700111},
  urldate = {2025-08-25},
  langid = {english}
}

@article{dattaRankSumTestsClustered2005,
  title = {Rank-{{Sum Tests}} for {{Clustered Data}}},
  author = {Datta, Somnath and Satten, Glen A},
  year = {2005},
  month = sep,
  journal = {Journal of the American Statistical Association},
  volume = {100},
  number = {471},
  pages = {908--915},
  publisher = {ASA Website},
  issn = {0162-1459},
  doi = {10.1198/016214504000001583},
  urldate = {2025-08-25}
}

@article{dattaSignedRankTestClustered2008,
  title = {A {{Signed-Rank Test}} for {{Clustered Data}}},
  author = {Datta, Somnath and Satten, Glen A.},
  year = {2008},
  month = jun,
  journal = {Biometrics},
  volume = {64},
  number = {2},
  pages = {501--507},
  issn = {0006-341X},
  doi = {10.1111/j.1541-0420.2007.00923.x},
  urldate = {2025-08-25}
}

@article{donnerAnalysisSitespecificData1988,
  title = {Analysis of {{Site-specific Data}} in {{Dental Studies}}},
  author = {Donner, A. and Banting, D.},
  year = {1988},
  month = nov,
  journal = {Journal of Dental Research},
  volume = {67},
  number = {11},
  pages = {1392--1395},
  publisher = {SAGE Publications Inc},
  issn = {0022-0345},
  doi = {10.1177/00220345880670110601},
  urldate = {2025-08-25},
  langid = {english}
}

@article{donnerStatisticalMethodsOphthalmology1989,
  title = {Statistical {{Methods}} in {{Ophthalmology}}: {{An Adjusted Chi-Square Approach}}},
  shorttitle = {Statistical {{Methods}} in {{Ophthalmology}}},
  author = {Donner, Allan},
  year = {1989},
  journal = {Biometrics},
  volume = {45},
  number = {2},
  eprint = {2531501},
  eprinttype = {jstor},
  pages = {605--611},
  publisher = {[Wiley, International Biometric Society]},
  issn = {0006-341X},
  doi = {10.2307/2531501},
  urldate = {2025-08-25}
}

@article{duttaRanksumTestClustered2016,
  title = {A Rank-Sum Test for Clustered Data When the Number of Subjects in a Group within a Cluster Is Informative},
  author = {Dutta, Sandipan and Datta, Somnath},
  year = {2016},
  journal = {Biometrics},
  volume = {72},
  number = {2},
  pages = {432--440},
  issn = {1541-0420},
  doi = {10.1111/biom.12447},
  urldate = {2025-08-25},
  copyright = {{\copyright} 2015, The International Biometric Society},
  langid = {english}
}

@article{evansDesirabilityOutcomeRanking2015a,
  title = {Desirability of {{Outcome Ranking}} ({{DOOR}}) and {{Response Adjusted}} for {{Duration}} of {{Antibiotic Risk}} ({{RADAR}})},
  author = {Evans, Scott R. and Rubin, Daniel and Follmann, Dean and Pennello, Gene and Huskins, W. Charles and Powers, John H. and Schoenfeld, David and {Chuang-Stein}, Christy and Cosgrove, Sara E. and Fowler, Jr, Vance G. and Lautenbach, Ebbing and Chambers, Henry F.},
  year = {2015},
  month = sep,
  journal = {Clinical Infectious Diseases},
  volume = {61},
  number = {5},
  pages = {800--806},
  issn = {1058-4838},
  doi = {10.1093/cid/civ495},
  urldate = {2025-08-25}
}

@article{jungSampleSizeCalculations2001,
  title = {Sample Size Calculations for Clustered Binary Data},
  author = {Jung, Sin-Ho and Kang, Seung-Ho and Ahn, Chul},
  year = {2001},
  journal = {Statistics in Medicine},
  volume = {20},
  number = {13},
  pages = {1971--1982},
  issn = {1097-0258},
  doi = {10.1002/sim.846},
  urldate = {2025-08-25},
  copyright = {Copyright {\copyright} 2001 John Wiley \& Sons, Ltd.},
  langid = {english}
}

@article{kangSampleSizeCalculation2003,
  title = {Sample {{Size Calculation}} for {{Dichotomous Outcomes}} in {{Cluster Randomization Trials}} with {{Varying Cluster Size}}},
  author = {Kang, Seung Ho and Ahn, Chul W. and Jung, Sin Ho},
  year = {2003},
  journal = {Drug Information Journal},
  volume = {37},
  number = {1},
  pages = {109--114},
  issn = {0092-8615},
  urldate = {2025-08-25}
}

@article{katheriaUmbilicalCordMilking2023,
  title = {Umbilical Cord Milking in Nonvigorous Infants: A Cluster-Randomized Crossover Trial},
  shorttitle = {Umbilical Cord Milking in Nonvigorous Infants},
  author = {Katheria, Anup C. and Clark, Erin and Yoder, Bradley and Schm{\"o}lzer, Georg M. and Yan Law, Brenda Hiu and {El-Naggar}, Walid and Rittenberg, David and Sheth, Sheetal and Mohamed, Mohamed A. and Martin, Courtney and Vora, Farha and Lakshminrusimha, Satyan and Underwood, Mark and Mazela, Jan and Kaempf, Joseph and Tomlinson, Mark and Gollin, Yvonne and Fulford, Kevin and Goff, Yvonne and Wozniak, Paul and Baker, Katherine and Rich, Wade and Morales, Ana and Varner, Michael and Poeltler, Debra and Vaucher, Yvonne and Mercer, Judith and Finer, Neil and El Ghormli, Laure and Rice, Madeline Murguia},
  year = {2023},
  month = feb,
  journal = {American Journal of Obstetrics and Gynecology},
  volume = {228},
  number = {2},
  pages = {217.e1-217.e14},
  issn = {0002-9378},
  doi = {10.1016/j.ajog.2022.08.015},
  urldate = {2025-08-25}
}

@article{larocqueTwoSampleTests2010b,
  title = {Two Sample Tests for the Nonparametric {{Behrens}}--{{Fisher}} Problem with Clustered Data},
  author = {Larocque, Denis and Haataja, Riina and Nevalainen, Jaakko and Oja, Hannu},
  year = {2010},
  month = aug,
  journal = {Journal of Nonparametric Statistics},
  volume = {22},
  number = {6},
  pages = {755--771},
  publisher = {Taylor \& Francis},
  issn = {1048-5252},
  doi = {10.1080/10485250903469728},
  urldate = {2025-08-25}
}

@article{raoSimpleMethodAnalysis1992,
  title = {A {{Simple Method}} for the {{Analysis}} of {{Clustered Binary Data}}},
  author = {Rao, J. N. K. and Scott, A. J.},
  year = {1992},
  journal = {Biometrics},
  volume = {48},
  number = {2},
  eprint = {2532311},
  eprinttype = {jstor},
  pages = {577--585},
  publisher = {International Biometric Society},
  issn = {0006-341X},
  doi = {10.2307/2532311},
  urldate = {2025-08-25}
}

@article{rosnerExtensionRankSum2006,
  title = {Extension of the {{Rank Sum Test}} for {{Clustered Data}}: {{Two-Group Comparisons}} with {{Group Membership Defined}} at the {{Subunit Level}}},
  shorttitle = {Extension of the {{Rank Sum Test}} for {{Clustered Data}}},
  author = {Rosner, Bernard and Glynn, Robert J. and Lee, Mei-Ling T.},
  year = {2006},
  month = dec,
  journal = {Biometrics},
  volume = {62},
  number = {4},
  pages = {1251--1259},
  issn = {0006-341X},
  doi = {10.1111/j.1541-0420.2006.00582.x},
  urldate = {2025-08-25}
}

@article{rosnerIncorporationClusteringEffects2003a,
  title = {Incorporation of {{Clustering Effects}} for the {{Wilcoxon Rank Sum Test}}: {{A Large-Sample Approach}}},
  shorttitle = {Incorporation of {{Clustering Effects}} for the {{Wilcoxon Rank Sum Test}}},
  author = {Rosner, Bernard and Glynn, Robert J. and Ting Lee, Mei-Ling},
  year = {2003},
  month = dec,
  journal = {Biometrics},
  volume = {59},
  number = {4},
  pages = {1089--1098},
  issn = {0006-341X},
  doi = {10.1111/j.0006-341X.2003.00125.x},
  urldate = {2025-08-25}
}

@article{rosnerMultivariateMethodsClustered1997,
  title = {Multivariate {{Methods}} for {{Clustered Ordinal Data}} with {{Applications}} to {{Survival Analysis}}},
  author = {Rosner, Bernard and Glynn, Robert J.},
  year = {1997},
  journal = {Statistics in Medicine},
  volume = {16},
  number = {4},
  pages = {357--372},
  issn = {1097-0258},
  doi = {10.1002/(SICI)1097-0258(19970228)16:4<357::AID-SIM420>3.0.CO;2-3},
  urldate = {2025-08-25},
  copyright = {Copyright {\copyright} 1997 John Wiley \& Sons, Ltd.},
  langid = {english}
}

@article{rosnerUseMannWhitney1999,
  title = {Use of the {{Mann}}--{{Whitney U-test}} for Clustered Data},
  author = {Rosner, B. and Grove, D.},
  year = {1999},
  journal = {Statistics in Medicine},
  volume = {18},
  number = {11},
  pages = {1387--1400},
  issn = {1097-0258},
  doi = {10.1002/(SICI)1097-0258(19990615)18:11<1387::AID-SIM126>3.0.CO;2-V},
  urldate = {2025-08-25},
  copyright = {Copyright {\copyright} 1999 John Wiley \& Sons, Ltd.},
  langid = {english}
}

@article{rosnerWilcoxonSignedRank2006,
  title = {The {{Wilcoxon Signed Rank Test}} for {{Paired Comparisons}} of {{Clustered Data}}},
  author = {Rosner, Bernard and Glynn, Robert J. and Lee, Mei-Ling T.},
  year = {2006},
  month = mar,
  journal = {Biometrics},
  volume = {62},
  number = {1},
  pages = {185--192},
  issn = {0006-341X},
  doi = {10.1111/j.1541-0420.2005.00389.x},
  urldate = {2025-08-25}
}

@article{shuLongitudinalBenefitRisk2025,
  title = {Longitudinal {{Benefit}}:Risk {{Analysis}} through the {{Desirability}} of {{Outcome Ranking}} ({{DOOR}}) with {{Application}} to {{ACTT-1 Trial}}},
  shorttitle = {Longitudinal {{Benefit}}},
  author = {Shu, Shiyu and Diao, Guoqing and Hamasaki, Toshimitsu and Evans, Scott},
  year = {2025},
  month = jul,
  journal = {Statistics in Biopharmaceutical Research},
  volume = {17},
  number = {3},
  pages = {488--495},
  publisher = {ASA Website},
  issn = {null},
  doi = {10.1080/19466315.2024.2413059},
  urldate = {2025-08-25}
}

@incollection{hamasakiDesirabilityOutcomeRanking2025,
  title = {The {{Desirability}} of {{Outcome Ranking}} ({{DOOR}})},
  booktitle = {Handbook of {{Generalized Pairwise Comparisons}}},
  author = {Hamasaki, Toshimitsu and Evans, Scott R.},
  year = {2025},
  publisher = {{Chapman and Hall/CRC}}
}

@article{rosnerPowerSampleSize2011,
  title = {Power and {{Sample Size Estimation}} for the {{Clustered Wilcoxon Test}}},
  author = {Rosner, Bernard and Glynn, Robert J.},
  year = {2011},
  month = jun,
  journal = {Biometrics},
  volume = {67},
  number = {2},
  pages = {646--653},
  issn = {0006-341X},
  doi = {10.1111/j.1541-0420.2010.01488.x},
  urldate = {2025-09-12}
}

@article{Diao13112025,
author = {Guoqing Diao and Qiang Zhang and Toshimitsu Hamasaki and Scott Evans},
title = {Group Sequential Design and Monitoring of Clustered Data in Randomized Eye Trials},
journal = {Statistics in Biopharmaceutical Research},
volume = {0},
number = {0},
pages = {1--10},
year = {2025},
publisher = {ASA Website},
doi = {10.1080/19466315.2025.2566339},
URL = {   
        https://doi.org/10.1080/19466315.2025.2566339
},
eprint = { 
    
        https://doi.org/10.1080/19466315.2025.2566339
}

}

@article{hamasakiPatientcentricParadigmTool2025,
  title = {A Patient-Centric Paradigm and Tool for Clinical Research: The {{DOOR}} Is Open},
  shorttitle = {A Patient-Centric Paradigm and Tool for Clinical Research},
  author = {Hamasaki, Toshimitsu and He, Yijie and Wu, Qihang and {Howard-Anderson}, Jessica and Boucher, Helen W. and Doernberg, Sarah B. and Holland, Thomas L. and Powers, John H. and Wang, Jing and Diao, Guoqing and {van Duin}, David and Fowler, Vance G. and Chambers, Henry F. and Evans, Scott R.},
  year = 2025,
  month = nov,
  journal = {Antimicrobial Agents and Chemotherapy},
  volume = {0},
  number = {0},
  pages = {e01478-25},
  publisher = {American Society for Microbiology},
  doi = {10.1128/aac.01478-25},
  urldate = {2025-11-29}
}

@article{turnerDalbavancinTreatmentStaphylococcus2025,
  title = {Dalbavancin for {{Treatment}} of {{Staphylococcus}} Aureus {{Bacteremia}}: {{The DOTS Randomized Clinical Trial}}},
  shorttitle = {Dalbavancin for {{Treatment}} of {{Staphylococcus}} Aureus {{Bacteremia}}},
  author = {Turner, Nicholas A. and Hamasaki, Toshimitsu and Doernberg, Sarah B. and Lodise, Thomas P. and King, Heather A. and Ghazaryan, Varduhi and Cosgrove, Sara E. and Jenkins, Timothy C. and Liu, Catherine and Sharma, Shrabani and Zaharoff, Smitha and Wahid, Lana and Renard, Valerie J. and Cook, Paul and Raad, Issam and Hachem, Ray and Chaftari, Anne-Marie and Sims, Matthew and DeMarco, Carmen and Miller, Loren G. and McCarthy, Matthew W. and Morse, Caryn G. and Lucasti, Chris and Forrest, Graeme N. and Cherabuddi, Kartikeya and Polk, Christopher and Fazili, Tasaduq and Rupp, Mark E. and Thompson, III, George R. and Kim, Kami and Strnad, Luke and Schnee, Amanda E. and McKinnell, James A. and Ramesh, Mayur and Silveira, Fernanda P. and McCarty, Todd P. and Lee, Todd C. and McDonald, Emily G. and Paolino, Kristopher and Wiegand, Katie and Wall, Alison and Riccobene, Todd and Patel, Rinal and Rappo, Urania and Evans, Scott and Chambers, Henry F. and Fowler, Jr, Vance G. and Holland, Thomas L. and {Antibacterial Resistance Leadership Group}},
  year = 2025,
  month = sep,
  journal = {JAMA},
  volume = {334},
  number = {10},
  pages = {866--877},
  issn = {0098-7484},
  doi = {10.1001/jama.2025.12543},
  urldate = {2025-11-29}
}

@article{tammaClinicalImpactCeftriaxone2022,
  title = {Clinical {{Impact}} of {{Ceftriaxone Resistance}} in {{Escherichia}} Coli {{Bloodstream Infections}}: {{A Multicenter Prospective Cohort Study}}},
  shorttitle = {Clinical {{Impact}} of {{Ceftriaxone Resistance}} in {{Escherichia}} Coli {{Bloodstream Infections}}},
  author = {Tamma, Pranita D and Komarow, Lauren and Ge, Lizhao and {Garcia-Diaz}, Julia and Herc, Erica S and Doi, Yohei and Arias, Cesar A and Albin, Owen and Saade, Elie and Miller, Loren G and Jacob, Jesse T and Satlin, Michael J and Krsak, Martin and Huskins, W Charles and Dhar, Sorabh and Shelburne, Samuel A and Hill, Carol and Baum, Keri R and Bhojani, Minal and {Greenwood-Quaintance}, Kerryl E and {Schmidt-Malan}, Suzannah M and Patel, Robin and Evans, Scott R and Chambers, Henry F and Fowler, Jr, Vance G and {van Duin}, David and {for the Antibacterial Resistance Leadership Group}},
  year = 2022,
  month = nov,
  journal = {Open Forum Infectious Diseases},
  volume = {9},
  number = {11},
  pages = {ofac572},
  issn = {2328-8957},
  doi = {10.1093/ofid/ofac572},
  urldate = {2025-11-29}
}

@article{fangSampleSizeDetermination2025,
  title = {Sample Size Determination for Win Statistics in Cluster-Randomized Trials},
  author = {Fang, Xi and Cao, Zhiqiang and Li, Fan},
  year = 2025,
  month = oct,
  eprint = {2510.22709},
  archiveprefix = {arXiv},
  note={{\em arXiv}:2510.22709}
}

@article{spybrookProgressDecadeExamination2016,
  title = {Progress in the Past Decade: An Examination of the Precision of Cluster Randomized Trials Funded by the {{U}}.{{S}}. {{Institute}} of {{Education Sciences}}},
  shorttitle = {Progress in the Past Decade},
  author = {Spybrook, Jessaca and Shi, Ran and Kelcey, Benjamin},
  year = 2016,
  month = jul,
  journal = {International Journal of Research \& Method in Education},
  volume = {39},
  number = {3},
  pages = {255--267},
  publisher = {Routledge},
  issn = {1743-727X},
  doi = {10.1080/1743727X.2016.1150454},
  urldate = {2026-01-16}
}

@article{cookEmergentPrinciplesDesign2005,
  title = {Emergent {{Principles}} for the {{Design}}, {{Implementation}}, and {{Analysis}} of {{Cluster-Based Experiments}} in {{Social Science}}},
  author = {Cook, Thomas D.},
  year = 2005,
  month = may,
  journal = {The ANNALS of the American Academy of Political and Social Science},
  volume = {599},
  number = {1},
  pages = {176--198},
  publisher = {SAGE Publications Inc},
  issn = {0002-7162},
  doi = {10.1177/0002716205275738},
  urldate = {2026-01-16},
  langid = {english}
}

@article{cornfieldSYMPOSIUMCHDPREVENTION1978,
  title = {Randomization by group: a formal analysis},
  shorttitle = {{{SYMPOSIUM ON CHD PREVENTION TRIALS}}},
  author = {Cornfield, Jerome},
  year = 1978,
  month = aug,
  journal = {American Journal of Epidemiology},
  volume = {108},
  number = {2},
  pages = {100--102},
  issn = {0002-9262},
  doi = {10.1093/oxfordjournals.aje.a112592},
  urldate = {2026-01-20}
}

@article{donnerAspectsDesignAnalysis1998,
  title = {Some Aspects of the Design and Analysis of Cluster Randomization Trials},
  author = {Donner, Allan},
  year = 1998,
  journal = {Journal of the Royal Statistical Society: Series C (Applied Statistics)},
  volume = {47},
  number = {1},
  pages = {95--113},
  issn = {1467-9876},
  doi = {10.1111/1467-9876.00100},
  urldate = {2026-01-20},
  copyright = {1998 Royal Statistical Society},
  langid = {english}
}

@article{donnerRegressionApproachAnalysis,
  title = {A {{Regression Approach}} to the {{Analysis}} of {{Data Arising}} from {{Cluster Randomization}}},
  author = {Donner, Allan},
  year = 1985,
  month = jun,
  journal = {International Journal of Epidemiology},
  volume = {14},
  number = {2},
  pages = {322--326},
  issn = {0300-5771},
  doi = {10.1093/ije/14.2.322},
  urldate = {2026-03-13}
}

@article{hedekerRandomeffectsRegressionModels1994,
  title = {Random-Effects Regression Models for Clustered Data with an Example from Smoking Prevention Research},
  author = {Hedeker, Donald and Gibbons, Robert D. and Flay, Brian R.},
  year = 1994,
  journal = {Journal of Consulting and Clinical Psychology},
  volume = {62},
  number = {4},
  pages = {757--765},
  publisher = {American Psychological Association},
  address = {US},
  issn = {1939-2117},
  doi = {10.1037/0022-006X.62.4.757}
}

@article{lairdMaximumLikelihoodComputations1987,
  title = {Maximum {{Likelihood Computations}} with {{Repeated Measures}}: {{Application}} of the {{EM Algorithm}}},
  shorttitle = {Maximum {{Likelihood Computations}} with {{Repeated Measures}}},
  author = {Laird, Nan and Lange, Nicholas and Stram, Daniel},
  year = 1987,
  month = mar,
  journal = {Journal of the American Statistical Association},
  volume = {82},
  number = {397},
  pages = {97--105},
  publisher = {Taylor \& Francis},
  issn = {0162-1459},
  doi = {10.1080/01621459.1987.10478395},
  urldate = {2026-01-20}
}

@article{liangLongitudinalDataAnalysis1986,
  title = {Longitudinal Data Analysis Using Generalized Linear Models},
  author = {Liang, KUNG-YEE and Zeger, SCOTT L.},
  year = 1986,
  month = apr,
  journal = {Biometrika},
  volume = {73},
  number = {1},
  pages = {13--22},
  issn = {0006-3444},
  doi = {10.1093/biomet/73.1.13},
  urldate = {2026-01-20}
}

@article{chamberlainDesirabilityOutcomeRanking2023,
  title = {Desirability of {{Outcome Ranking}} for {{Status Epilepticus}}},
  author = {Chamberlain, James M. and Kapur, Jaideep and Silbergleit, Robert S. and Elm, Jordan J. and Rosenthal, Eric S. and Bleck, Thomas P. and Shinnar, Shlomo and Zetabchi, Shahriar and Evans, Scott R.},
  year = 2023,
  month = oct,
  journal = {Neurology},
  volume = {101},
  number = {16},
  pages = {e1633-e1639},
  publisher = {Wolters Kluwer},
  doi = {10.1212/WNL.0000000000207684},
  urldate = {2026-01-21}
}

@article{rosnerStatisticalMethodsOphthalmology1982,
  title = {Statistical {{Methods}} in {{Ophthalmology}}: {{An Adjustment}} for the {{Intraclass Correlation}} between {{Eyes}}},
  shorttitle = {Statistical {{Methods}} in {{Ophthalmology}}},
  author = {Rosner, Bernard},
  year = 1982,
  journal = {Biometrics},
  volume = {38},
  number = {1},
  eprint = {2530293},
  eprinttype = {jstor},
  pages = {105--114},
  publisher = {[Wiley, International Biometric Society]},
  issn = {0006-341X},
  doi = {10.2307/2530293},
  urldate = {2026-01-23}
}

@article{wangChoosingUnitRandomization2024,
  title = {Choosing the {{Unit}} of {{Randomization}} --- {{Individual}} or {{Cluster}}?},
  author = {Wang, Rui},
  year = 2024,
  month = mar,
  journal = {NEJM Evidence},
  volume = {3},
  number = {4},
  pages = {EVIDe2400037},
  publisher = {Massachusetts Medical Society},
  doi = {10.1056/EVIDe2400037},
  urldate = {2026-01-23}
}

@article{cookStatisticalLessonsLearned2016,
  title = {Statistical Lessons Learned for Designing Cluster Randomized Pragmatic Clinical Trials from the {{NIH Health Care Systems Collaboratory Biostatistics}} and {{Design Core}}},
  author = {Cook, Andrea J and Delong, Elizabeth and Murray, David M and Vollmer, William M and Heagerty, Patrick J},
  year = 2016,
  journal = {Clinical Trials},
  volume = {13},
  number = {5},
  pages = {504--512},
  publisher = {SAGE Publications},
  issn = {1740-7745},
  doi = {10.1177/1740774516646578},
  urldate = {2026-01-23},
  langid = {english}
}

@article{sandovalDesirabilityOutcomeRanking2024,
  title = {Desirability of Outcome Ranking for Obstetrical Trials: Illustration and Application to the {{ARRIVE}} Trial},
  shorttitle = {Desirability of Outcome Ranking for Obstetrical Trials},
  author = {Sandoval, Grecio J. and Grobman, William A. and Evans, Scott R. and Rice, Madeline M. and Clifton, Rebecca G. and Chauhan, Suneet P. and Costantine, Maged M. and Gibson, Kelly S. and Longo, Monica and Metz, Torri D. and Miller, Emily S. and Parry, Samuel and Reddy, Uma M. and Rouse, Dwight J. and Simhan, Hyagriv N. and Thorp, John M. and Tita, Alan T. N. and Saade, George R.},
  year = 2024,
  month = mar,
  journal = {American Journal of Obstetrics and Gynecology},
  volume = {230},
  number = {3},
  pages = {370.e1-370.e12},
  issn = {0002-9378},
  doi = {10.1016/j.ajog.2023.09.016},
  urldate = {2026-01-23}
}

@book{jungClusterRandomizationTrials2024,
  title = {Cluster Randomization Trials: Statistical Design and Analysis},
  shorttitle = {Cluster {{Randomization Trials}}},
  author = {Jung, Sin-Ho},
  year = 2024,
  month = dec,
  publisher = {{Chapman and Hall/CRC}},
  address = {New York},
  doi = {10.1201/9781003221951},
  isbn = {978-1-003-22195-1}
}

@article{hemmingKeyConsiderationsDesigning2023,
  title = {Key Considerations for Designing, Conducting and Analysing a Cluster Randomized Trial},
  author = {Hemming, Karla and Taljaard, Monica},
  year = 2023,
  month = oct,
  journal = {International Journal of Epidemiology},
  volume = {52},
  number = {5},
  pages = {1648--1658},
  issn = {0300-5771},
  doi = {10.1093/ije/dyad064},
  urldate = {2026-01-26}
}

@book{hunterNotesGraduatelevelCourse,
  title = {Notes for a Graduate-Level Course in Asymptotics for Statisticians},
  year=2014,
  author = {Hunter, David R},
  langid = {english},
  publisher = {The Pennsylvania State University}
}

@article{katheriaApplicationDesirabilityOutcome2024,
  title = {Application of Desirability of Outcome Ranking to the Milking in Non-Vigorous Infants Trial},
  author = {Katheria, Anup C. and {El ghormli}, Laure and Rice, Madeline M. and Dorner, Rebecca A. and Grobman, William A. and Evans, Scott R.},
  year = 2024,
  month = feb,
  journal = {Early Human Development},
  volume = {189},
  pages = {105928},
  issn = {0378-3782},
  doi = {10.1016/j.earlhumdev.2023.105928},
  urldate = {2026-04-23}
}

@article{buyseGeneralizedPairwiseComparisons2010,
  title = {Generalized Pairwise Comparisons of Prioritized Outcomes in the Two-Sample Problem},
  author = {Buyse, Marc},
  year = 2010,
  journal = {Statistics in Medicine},
  volume = {29},
  number = {30},
  pages = {3245--3257},
  issn = {1097-0258},
  doi = {10.1002/sim.3923},
  urldate = {2026-04-24},
  copyright = {Copyright \copyright{} 2010 John Wiley \& Sons, Ltd.},
  langid = {english}
}
